\tikzstyle{param}=[circle, minimum size = 0.7cm, thick, draw=black!100, fill = gray!10, node distance = 0.5cm]
\tikzstyle{data}=[rectangle, minimum size = 0.7cm, thick, draw =black!100, node distance = 0.5cm]
\tikzstyle{model}=[rectangle, minimum size = 1cm, thick, draw=black!100, node distance = 0.5cm]
\numberwithin{equation}{section}
\theoremstyle{plain}
\newtheorem{Theorem}{Theorem}[section]
\newtheorem{corollary}{Corollary}[Theorem]
\newtheorem{proposition}{Proposition}[Theorem]
\newtheorem{Definition}{Definition}
\def\R{\mathbb{R}}
\def\I{\mathbb{I}}
\def\Y{{\tilde Y}}
\def\y{{\tilde y}}
\def\l{{l}}
\def\p{{\tilde p}}
\def\q{{\tilde q}}
\def\b{\mathcal{V}}
\newcommand\widebar[1]{\bar{#1}}%
\newcommand{\crvBr}[1]{\left\{#1\right\}}
\newcommand{\sqrBr}[1]{\left[#1\right]}
\begin{document}

\begin{frontmatter}
\title{Valid belief updates for prequentially additive loss functions arising in Semi-Modular Inference}
\runtitle{Valid variants of Semi-Modular Inference}

\begin{aug}
\author{\fnms{Geoff K.} \snm{Nicholls}\thanksref{addr1}\ead[label=e1]{nicholls@stats.ox.ac.uk}},
\author{\fnms{Jeong Eun} \snm{Lee}\thanksref{addr2}\ead[label=e2]{kate.lee@auckland.ac.nz}},
\author{\fnms{Chieh-Hsi} \snm{Wu}\thanksref{addr3}%
\ead[label=e3]{C-H.Wu@soton.ac.uk}}
\and
\author{\fnms{Chris U.} \snm{Carmona}\thanksref{addr1}\ead[label=e4]{carmona@stats.ox.ac.uk}}

\runauthor{G.K. Nicholls et al.}

\address[addr1]{
Department of Statistics,
  University of Oxford,
  Oxford, UK.
    \printead{e1} 
    \printead{e4} 
}
\address[addr2]{
Department of Statistics,
  University of Auckland,
  Auckland, NZ.
    \printead{e2}
}
\address[addr3]{
Mathematical Sciences,
  University of Southampton,
  Southampton, UK.
    \printead{e3}
}
\end{aug}

\begin{abstract}
Model-based Bayesian evidence combination leads to models with multiple parameteric modules. In this setting the effects of model misspecification in one of the modules may in some cases be ameliorated by cutting the flow of information from the misspecified module. Semi-Modular Inference (SMI) is a framework allowing partial cuts which modulate but do not completely cut the flow of information between modules. We show that SMI is part of a family of inference procedures which implement partial cuts. It has been shown that additive losses determine an optimal, valid and order-coherent belief update. The losses which arise in Cut models and SMI are not additive. However, like the prequential score function, they have a kind of prequential additivity which we define. We show that prequential additivity is sufficient to determine the optimal valid and order-coherent belief update and that this belief update coincides with the belief update in each of our SMI schemes. 
\end{abstract}

\begin{keyword}[class=MSC]
\kwd[Primary ]{62C10}
\kwd{62C10}
\kwd[; secondary ]{62F35, 65C05}
\end{keyword}

\begin{keyword}
\kwd{Bayesian Inference}
\kwd{Cut models}
\kwd{Semi-Modular Inference}
\kwd{Misspecification}
\kwd{Monte-Carlo}
\end{keyword}

\end{frontmatter}






\section{Introduction}

Bayesian analysis integrates different sources of information or ``modules" into a single analysis through Bayes theorem and quantifies uncertainties in parameters. The information in each module, which may be prior belief or observations or both, is encoded as a parametric model. Evidence synthesis can give better predictability, more precise estimation, and access to shared parameter estimation \citep{Ades2006, Sweeting2009, Harris2012,Fithian2015,Pacifici2017,Nicholson2021covid}.  
 
As modules are added to an overall model, there is an increasing hazard for misspecification. Methods that help us carry out Bayesian analysis on misspecified models have been in development for some time   without explicit consideration of modularisation. 
We divide these into three classes. Methods which temper the likelihood lead to power posteriors, \cite{Walker2001,Grunwald2012,Miller2018a}, methods which use bootstrapping in a Bayesian setting, including Weighted likelihood Bootstrap \citep{Newton1991,Newton1994,Lyddon2019}, the Posterior Bootstrap \citep{Pompe2021} and BayesBag \citep{Buhlmann2014,Huggins2021}, and methods which replace the likelihood with some more general loss function mediating data and parameter, including PAC-Bayes \citep{Germain2016,Zhang2006,McAllester1998,Shawe1997}, Gibbs posteriors \citep{Zhang2006, Jiang2008} and Generalized Bayes \citep{Bissiri2016, Grunwald2017}
are relevant in the multi-modular setting.

Multi-modular Bayesian inference with misspecified models has some features which distinguish it from misspecification in single module settings. \citet{Liu2009} gave an early ``modularization" analysis. Markov melding \citep{Goudie2019melding} and Bayesian melding \citep{Poole2000} can be characterised as dealing with priors which conflict across modules. In our own work
we assume that modules have been identified as either misspecified or well-specified. This may be the conclusion of a first stage Bayesian analysis of the overall multi-modular model. A typical and well-founded objective is to estimate the parameters of a well-specified module making careful use of information from misspecified modules. 

{\it Cut-model} inference \citep{Plummer2015} has proven very effective in this setting. We discuss this in detail below. It can be thought of as a kind of sequential imputation procedure, in which the distribution of a shared parameter is imputed form the information in one module and then passed on as a kind of prior for the shared parameter in a second module. This is not Bayesian inference, as information from the second module does not inform the shared parameter. A early form of Cut model inference has been available in WinBUGS \citep{Spiegelhalter2014} for some time.
Cut models have found many applications: air pollution \citep{BLANGIARDO2011379}, epidemiological models \citep{ Maucort2008, Finucane2016, Li2017, Nicholson2021covid, Teh2021covid}, meta-analysis \citep{Lunn2009, Lunn2013, Kaizar2015} and propensity scores \citep{Zigler2013, Zigler2014}. \citet{Jacob2017b} gives an overview of modularized Bayesian analysis including Cut-models from the perspective of statistical decision theory and \citet{Pompe2021} gives asymptotic properties. Nested MCMC \citep{Plummer2015} is commonly used for fitting Cut models. New developments include variational approximation \citep{Yu2021variationalcut} and a computationally efficient variant of nested-MCMC \citep{Liu2020sacut}.

In Cut-model inference, feedback from the suspect module is completely cut. However, there may be a bias-variance trade-off: if the parameters of a well-specified module are poorly informed by ``local" information then limited information from misspecified modules may allow us to bring the uncertainty down without introducing significant bias.   
Semi-Modular Inference ($\eta$-SMI, \cite{carmona20}) generalises Cut-model inference as it offers a means by which we can control the influence of suspect modules on the fit for a good module. Candidate posterior distributions are indexed by an associated influence parameter $\eta$. At $\eta=1$ $\eta$-SMI is standard Bayesian inference and at $\eta=0$ $\eta$-SMI reproduces the Cut-model. \citet{carmona20} suggest choosing $\eta$ maximizing the expected log pointwise predictive density (ELPD) though this choice is not an essential part of their method and other criteria \citep{Wu2020calib-comparison} may be more appropriate in different settings.  \cite{Liu2021generalized} adapt $\eta$-SMI for Geographically Weighted Regression using an influence parameter across likelihood factors which is modeled as a function of distance between the spatial observation locations.


Many of the papers cited up to this point propose probability distributions which can be seen as alternative posteriors, in the sense that they offer quantification of uncertainty. \cite{Bissiri2016} call these alternative data-informed mappings ``belief updates" and characterise the optimal, valid and order-coherent belief update as a Gibbs posterior in a Generalised Bayes setting. Several existing belief updates, such as the power posterior, are known to be valid. However the characterisation of a valid belief update holds for losses which are additive. We extend the class of losses for which valid belief updates can be defined. In particular we show the Cut-models and $\eta$-SMI have a kind of prequential additivity which is sufficient for the theory of \cite{Bissiri2016} to apply. We point out alternative SMI procedures, which we call $\delta$-SMI and $\gamma$-SMI. These offer different interpolating sequences of candidate posterior distributions. The $\delta$-SMI sequence progressively ``blurs" the data away using a procedure resembling Approximate Bayesian Computation (ABC) but otherwise offers candidate posterior distributions which are often similar to those of $\eta$-SMI. \citet{Liu2021generalized} consider a parallel framework for deriving belief updates due to \cite{Zhang2006}. This resembles PAC-Bayesian approaches \citep{McAllester1998, Shawe1997} and is distinct from the approach of \cite{Bissiri2016}.

This paper has three main parts. In the first part (Section~\ref{sec:intro}) we show how prequential additivity leads to valid updates of belief and show that the Cut model is a valid belief update. In the second part (Sections~\ref{sec:smi} and \ref{sec:properties_of_SMI}) we introduce some SMI-variants and consider their properties. We use the theory from the first part to show they are valid belief updates. In the third part (Section~\ref{sec:data_analyses}) we give some simple examples to explore the behavior of one of the SMI-variants introduced in the second part.

\section{Belief updates} \label{sec:intro}


\subsection{Belief updates and the Cut model}\label{sec:beliefupdate-cutmodel-intro}

\begin{figure}[!ht]
  \begin{center}
    \begin{tikzpicture}
      \node (Z) [data] {$Z$};
      \node (Y) [data, right=of Z, xshift=0.5cm] {$Y$};
      \node (phi) [param, below=of Z] {$\varphi$};
      \node (theta) [param, below=of Y] {$\theta$};
      \edge {phi} {Y, Z};
      \edge {theta} {Y};
      \draw[dashed,red,line width=0.5mm] (0.85,0) to (0.85,-1.25);
      \node[text width=3cm] at (0.5,1) {Module 1};
      \node[text width=3cm] at (2.75,1) {Module 2};
    \end{tikzpicture}
  \end{center}
  \caption{Graphical representation of a simple multi-modular model. The Bayes posterior for this model in given in \eqref{eq:bayespost}. The dashed vertical line indicates that a Cut model is used. The Cut-model posterior is given in \eqref{eq:cutpost}.}
  \label{fig:toy_multimodular_model}
\end{figure}
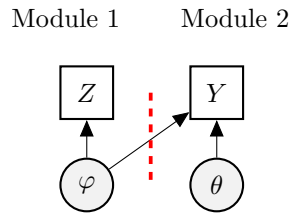

Consider the two-module configuration of Fig.~\ref{fig:toy_multimodular_model}. Let $Z=(Z_1,...,Z_m)$ and $Y=(Y_1,...,Y_n)$ be two vectors of data with model parameter vectors $\varphi$ and $\theta$. In our notation below we take the sample spaces to be $Z_j\in \R^{d_z},\ j=1,...,m$, $Y_i\in \R^{d_y},\ i=1,...,n$, $\varphi\in \R^{p_\varphi}$ and $\theta\in \R^{p_\theta}$ though this is not an essential restriction and our final example takes discrete data.
In Generalised Bayesian inference with a Gibbs posterior \citep{Chernozhukov2003,Zhang2006,Jiang2008,Bissiri2016} we have a prior $\pi_0(\theta,\varphi)$ (a density here) and a loss $l(\varphi,\theta;Y,Z)$ connecting the data and parameters, measuring how well the parameters agree with the data.

A {\it belief update} $\psi$ is a rule which updates the prior distribution, taking into account the data through the loss. It determines an updated belief distribution $\p$. When we are choosing between different belief updates we refer to these as ``candidate posteriors".
  For the model in Fig.~\ref{fig:toy_multimodular_model}, we write
  \begin{equation}\label{eq:belief_update_first}
    \p(\varphi, \theta \mid Y,Z) = \psi\{ l(\varphi, \theta;Y,Z), \pi_0 \}.
  \end{equation}
When we specify a probability distribution from a loss, via a belief update, we write it as if it is a conditional probability density. We use $p()$ (for densities over data) and $\pi()$ (for densities over parameters) when they may be understood as a conditional probability. However, many belief updates, like the Cut model below, do not yield conditional probability distributions. We write $\p()$ for probability distributions of this kind.

In Generalised Bayes the belief update from the prior to the posterior is
\begin{equation}\label{eq:genbayes_expost_firsttime}
\p(\varphi, \theta|Y,Z)\propto \exp(-l(\varphi, \theta;Y,Z)) \pi_0(\varphi, \theta).
\end{equation}
For example, in Bayesian inference with observation models $p(Z | \varphi)$ and $p(Y | \varphi,\theta)$ (probability densities, say) the posterior distribution of $(\varphi,\theta)$ is
\begin{equation} \label{eq:bayespost}
  \pi(\varphi,\theta \mid Y, Z) \propto p(Z\mid \varphi) p(Y\mid \varphi,\theta)\pi(\varphi,\theta),
\end{equation}
and so the loss in \eqref{eq:genbayes_expost_firsttime} ``must have been'' the negative log-likelihood,
\begin{equation} \label{eq:bayes_loss}
  \l^{(b)}( \varphi,\theta ; Y, Z) = - \log p(Z \mid \varphi) -\log p(Y \mid \varphi,\theta).
\end{equation}
In this paper we follow \cite{Bissiri2016} and ask why \eqref{eq:genbayes_expost_firsttime} is a valid belief update of $\pi_0$ in the context of Cut-model inference \citep{Plummer2015} and in related forms of Semi-Modular Inference (SMI, \cite{carmona20}). 

We identify a feature of this setup which does not seem to have been considered explicitly to date: the loss itself may depend on our state of knowledge of the relation between parameters, for example, on $\pi_0(\theta|\varphi)$. This is present in the Gibbs posterior for the Cut model.
The Cut ``posterior'' for  $(\varphi,\theta)$ is \citep{Plummer2015}
\begin{equation} \label{eq:cutpost}
  \p^{(c)}(\varphi,\theta \mid Y, Z) = \pi(\varphi \mid Z) \pi(\theta \mid Y,\varphi).
\end{equation}
The Cut is indicated in Figure~\ref{fig:toy_multimodular_model} by the vertical dashed line.
It is called a ``Cut model" because the flow of information from the $Y$-module into the $Z$-module has been cut. The flow of information between modules of a Cut model is asymmetrical. This makes sense when the generative model for the data $Y$ is misspecified, but the generative model for the $Z$-module is correct.
The idea is to infer or ``impute" $\varphi$ using the reliable $Z$-model and stop misspecification in the $Y$-model from biasing that analysis. 

The Cut model above can be written in terms of the Bayes posterior
\begin{equation} \label{eq:cut_from_post}
  \p^{(c)}(\varphi,\theta \mid Y, Z) \propto  \pi(\varphi,\theta \mid Y, Z)/p(Y|\varphi)
\end{equation}
where
\begin{equation}\label{eq:marginal_Y_phi}
  p(Y|\varphi)=\int p(Y|\varphi, \theta)\pi_0(\theta|\varphi)d\theta.
\end{equation}
In our setting the likelihoods $p(Y|varphi,\theta)$ and $p(Z|\varphi)$ can be easily evaluated, but $p(Y|\varphi)$ cannot.
If the Cut model is a belief update with a Gibbs posterior, then the loss in \eqref{eq:genbayes_expost_firsttime} yielding \eqref{eq:cutpost} ``must have been''
\begin{equation} \label{eq:cut_loss}
  \l^{(c)}( \varphi,\theta ; Y, Z , \pi_0) = \l^{(b)}( \varphi,\theta ; Y, Z ) + \log p(Y \mid \varphi).
\end{equation}
 The loss function for the Cut model depends on the prior $\pi_0(\theta|\varphi)$ through the term $ \log p(Y \mid \varphi)$. 
In the setting of \cite{bissiriwalker12a} this prior dependence could be thought of as another ``piece of information'' alongside $Y,Z$. They write the loss $l(\xi;I)$ where $\xi=(\varphi,\theta)$ is the parameter and $I=(Y,Z)$ is the data or ``information'' informing the parameters. We recover that setup, at least formally, if we write $I=(Y,Z,\pi_0)$.
However, \cite{Bissiri2016} determine valid belief updates for {\it additive} losses only (see below), and we will see that the Cut-model loss is not additive, so we can ask, what is the valid belief update for the Cut-model loss? Does it coincide with the Cut-model posterior?

\subsection{Additive losses}\label{sec:additivity}

We now consider how a loss might be additive in this setting. Consider conditionally iid data $Y_1,...,Y_n|\varphi,\theta$ and $Z_1,...,Z_m|\varphi$. Let
$Y^{(1:K)}=(Y^{(1)},...,Y^{(K)})$ and $Z^{(1:K)}=(Z^{(1)},...,Z^{(K)})$
be  partitions of $Y$ and $Z$ into $K$ sets, which may be empty, with the data taken in any order. 
\begin{Definition}\label{def:additivity_first} (Additivity)
Loss function $l(\varphi, \theta;Y,Z)$ is additive if
\[
l(\varphi, \theta;Y,Z)=\sum_{k=1}^K l(\varphi, \theta;Y^{(k)},Z^{(k)})
\]
for any partition $Y^{(1:K)},Z^{(1:K)}$ of the data $(Y,Z)$.
\end{Definition}

The Bayes loss $l^{(b)}$ in \eqref{eq:bayes_loss} is additive for iid data. In contrast, the Cut loss in \eqref{eq:cut_loss} is not additive over $k$ as it depends on the marginal $p(Y|\varphi)$ in \eqref{eq:marginal_Y_phi}. However, the Cut loss depends on the evolving state of knowledge and this needs to be accounted for in the accumulated loss. The ``prequential score'' for prediction (\cite{dawid15}, Section 4) has a similar dependence on an evolving predictive distribution and so we call this property {\it prequential additivity}. 

\begin{Definition}\label{def:preq_add_first} (Prequential Additivity)
Let a belief update $\psi^{(q)}$ be given. For $k=1,...,K$ let 
\begin{equation}\label{eq:q-in-def-preq-add}
\q_k(\varphi,\theta)=\psi^{(q)}(l(\varphi, \theta;Y^{(1:k)},Z^{(1:k)},\pi_0),\pi_0)
\end{equation}
be the belief distribution for $\varphi$ and $\theta$ after the arrival of the first $k$ sets of data $Y^{(1:k)},Z^{(1:k)}$. Let $\q_0=\pi_0$.
The loss $l$ is {\it prequentially additive with respect to the belief update} $\psi^{(q)}$ if the total accumulated loss over a sequence of measurements $(Y^{(k)},Z^{(k)}), k=1,...,K$ is equal to the loss from a single bulk measurement,
\begin{equation}\label{eq:intro_preq_additivity}
   l(\varphi, \theta;Y,Z,\pi_0)=\sum_{k=1}^K l(\varphi, \theta;Y^{(k)},Z^{(k)},\q_{k-1}),
\end{equation}
for any partition $Y^{(1:K)},Z^{(1:K)}$ of the data $(Y,Z)$. 
\end{Definition}
This is a condition on a predefined loss, not the definition of the total loss as is the case for the prequential score, so it will only hold if there is a relation between the loss $l$ and the belief update $\psi^{(q)}$. 
A loss which does not depend on the prior and is additive is clearly prequentially additive. However, for example, the Cut-model loss is prequentially additive but not additive.

\begin{proposition}\label{prop:cutmodeladditive}
  The Cut-model loss $l^{(c)}( \varphi,\theta ; Y, Z ,\pi_0)$ in \eqref{eq:cut_loss} is prequentially additive with respect to the belief update 
  \[
  \psi^{(q)}(l^{(c)}(\varphi, \theta;Y,Z,\pi_0),\pi_0)\propto \exp(-l^{(c)}(\varphi, \theta;Y,Z,\pi_0)) \pi_0(\varphi, \theta).
 \]
 which is just the Cut-model posterior.
\end{proposition}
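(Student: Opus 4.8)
The plan is to compute the intermediate belief distributions $\q_k$ in closed form, observe that substituting $\q_{k-1}$ into the Cut loss for the $k$-th block converts the module marginal $p(Y^{(k)}\mid\varphi)$ into a one-step-ahead Bayesian predictive density within the $Y$-module, and then telescope the resulting sum over $k$ by the chain rule. For \textbf{Step 1}, the intermediate belief distributions, I would apply the stated Gibbs update to the first $k$ blocks: using $l^{(c)}(\varphi,\theta;Y^{(1:k)},Z^{(1:k)},\pi_0)=-\log p(Z^{(1:k)}\mid\varphi)-\log p(Y^{(1:k)}\mid\varphi,\theta)+\log p(Y^{(1:k)}\mid\varphi)$ together with $\pi_0(\varphi,\theta)=\pi_0(\varphi)\pi_0(\theta\mid\varphi)$, the unnormalised density $\exp(-l^{(c)})\,\pi_0(\varphi,\theta)$ factorises, after grouping the $\varphi$-only and the $\theta$-given-$\varphi$ pieces, into the product form
\[
\q_k(\varphi,\theta)=\pi(\varphi\mid Z^{(1:k)})\,\pi(\theta\mid Y^{(1:k)},\varphi);
\]
that is, $\q_k$ is exactly the Cut posterior \eqref{eq:cutpost} built from the first $k$ blocks. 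The only feature of $\q_{k-1}$ that enters the loss at block $k$ is its conditional $\q_{k-1}(\theta\mid\varphi)=\pi(\theta\mid Y^{(1:k-1)},\varphi)$, the ordinary Bayes posterior for $\theta$ within the $Y$-module given $\varphi$ (which does not involve $Z$); the convention $\q_0=\pi_0$ is the empty-data instance of this.

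For \textbf{Step 2}, the predictive term: since the fourth (``information'') slot of $l^{(c)}$ enters only through $\log\int p(\cdot\mid\varphi,\theta)\,q(\theta\mid\varphi)\,d\theta$, substituting $q=\q_{k-1}$ gives
\[
l^{(c)}(\varphi,\theta;Y^{(k)},Z^{(k)},\q_{k-1})=-\log p(Z^{(k)}\mid\varphi)-\log p(Y^{(k)}\mid\varphi,\theta)+\log p(Y^{(k)}\mid Y^{(1:k-1)},\varphi),
\]
where $p(Y^{(k)}\mid Y^{(1:k-1)},\varphi):=\int p(Y^{(k)}\mid\varphi,\theta)\,\pi(\theta\mid Y^{(1:k-1)},\varphi)\,d\theta$ is the one-step-ahead posterior predictive density for the incoming block within the $Y$-module; here I use conditional independence of the $Y$-blocks given $(\varphi,\theta)$.

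For \textbf{Step 3}, telescoping: summing over $k=1,\dots,K$, the terms $-\log p(Z^{(k)}\mid\varphi)$ add to $-\log p(Z\mid\varphi)$ and the terms $-\log p(Y^{(k)}\mid\varphi,\theta)$ add to $-\log p(Y\mid\varphi,\theta)$, both by conditional independence of the blocks, while the predictive terms telescope by the chain rule applied to the measure with density $p(\cdot\mid\varphi,\theta)\pi_0(\theta\mid\varphi)$, so that $\sum_{k}\log p(Y^{(k)}\mid Y^{(1:k-1)},\varphi)=\log p(Y\mid\varphi)$. Hence the right-hand side of \eqref{eq:intro_preq_additivity} equals $l^{(b)}(\varphi,\theta;Y,Z)+\log p(Y\mid\varphi)=l^{(c)}(\varphi,\theta;Y,Z,\pi_0)$, as claimed. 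Empty blocks contribute zero to each sum and leave $\q_k$ unchanged, and blocks in which only one of $Y^{(k)},Z^{(k)}$ is nonempty are handled in the same way, so the identity holds for every partition.

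I expect the main obstacle to be conceptual rather than computational: the content of the argument is the recognition that the prior-dependent correction $\log p(Y\mid\varphi)$ in the Cut loss is precisely the term whose blockwise version telescopes --- equivalently, that re-plugging the updated belief $\q_{k-1}$ turns the module marginal $p(Y^{(k)}\mid\varphi)$ into the Bayes one-step predictive $p(Y^{(k)}\mid Y^{(1:k-1)},\varphi)$. Making this precise requires care in interpreting the fourth argument of the loss and in checking in Step~1 that the Gibbs formula applied to partial data really does reproduce the Cut posterior, since this is what guarantees that $\q_{k-1}(\theta\mid\varphi)$ is a genuine Bayes posterior and that the substitution in Step~2 lands on an honest predictive density.
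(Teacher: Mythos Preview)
Your proposal is correct and follows essentially the same approach as the paper: identify $\q_{k-1}(\theta\mid\varphi)=\pi(\theta\mid Y^{(1:k-1)},\varphi)$ as the ordinary Bayes conditional within the $Y$-module, so that the prior-dependent term becomes the one-step predictive $\log p(Y^{(k)}\mid Y^{(1:k-1)},\varphi)$, and then telescope via the chain rule to recover $\log p(Y\mid\varphi)$. The paper's proof carries this out only for $K=2$ (remarking that the general case follows by iterated splitting), whereas you treat general $K$ directly, but the underlying argument is the same.
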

\begin{proof}
see Appendix~\ref{sec:prop:cutmodeladditive:proof}. The proof is closely related to the proof of order-coherence of Cut-model inference given in \cite{carmona20}.
\end{proof}






\subsection{Order-coherence and valid belief updates}
\label{sec:preq-add-preq-co}

In this section we show that the conclusions of \cite{Bissiri2016} extend to cover prequentially additive losses. We need this extension because prequential additivity is a weaker condition than the assumed additivity. We begin by defining order-coherence. 

Consider a general partition of the data $(Y,Z)$ into $K=2$ arbitrary subsets, as in the previous section, with $Y^{(1:2)}=(Y^{(1)},Y^{(2)})$ and $Z^{(1:2)}=(Z^{(1)},Z^{(2)})$. 
A belief update $\psi$ is \emph{order-coherent} in the sense of \cite{Bissiri2016} if the posterior for independent data is the same regardless of whether we update belief from the prior, taking all the data in one tranche, or update with $Y^{(1)},Z^{(1)}$ and use the resulting posterior as the prior for a belief update with $Y^{(2)},Z^{(2)}$. In our setting with prior-dependence in the loss function we have the following definition.
\begin{Definition} \label{def:preq-order-coherent} (order-coherence)
  Let a belief update $\psi^{(q)}$ be given and let
  \begin{equation}\label{eq:p1-in-def-order-co}
    \q_1(\varphi, \theta)=\psi^{(q)}\{ l(\varphi, \theta;Y^{(1)},Z^{(1)},\pi_0),\pi_0\}.
  \end{equation}
  Belief update $\psi^{(q)}$ is order-coherent if
  \begin{equation}\label{eq:preq-order-coherent}
    \psi^{(q)}\{ l(\varphi,\theta;Y,Z,\pi_0), \pi_0 \}=\psi^{(q)}\{ l(\varphi,\theta;Y^{(2)},Z^{(2)},\q_1) , \q_1 \},
  \end{equation}
  for every $n,m>0$ and every partition of the data taken in any order.
\end{Definition}
The property is defined for $K=2$ as it will hold for sequential belief updates along partitions of the data into $K>2$ sets if it holds for $K=2$. Order-coherence seems to us an axiomatic property for a belief update.


\cite{Bissiri2016} show that the optimal, valid and order-coherent
belief update $\psi$ is the probability measure $\nu(d\theta\,d\varphi)$ minimising the loss
\begin{equation}\label{eq:big_post_loss}
  L(\nu; Y,Z,\pi_0)=\int l(\varphi, \theta;Y,Z,\pi_0) \nu(d\theta\, d\varphi)+KL(\nu||\pi_0)
\end{equation}
over $\nu\in \mathcal F$ where $\mathcal F$ is the family measures, absolutely continuous with respect to the measure of $\pi_0$, for which $E_{\nu}(l(\varphi, \theta;Y,Z,\pi_0))$ is finite, that is
\begin{equation}\label{eq:valid_belief_update}
  \psi\{ l(\varphi, \theta;Y,Z,\pi_0), \pi_0 \}=\arg\min_\nu L(\nu;Y,Z,\pi_0).
\end{equation}
They first show that a valid belief update should minimise an overall loss $L$ of the form $L=E_\nu(l)+D(\nu,\pi_0)$, where the second term is a measure $D(\nu,\pi_0)$ of divergence between prior and $\nu$. For our purposes this actually defines what we mean by a ``valid" belief update. \cite{bissiriwalker12a} and \cite{Bissiri2016} show that if the loss $l$ is additive, and the belief update $\psi$ determined by \eqref{eq:valid_belief_update} is required to be order-coherent whatever the prior, parameter space, loss and data it is updating, if $L$ has a unique minimum and $D=D_g$ is a $g$-divergence (see Appendix~\ref{sec:preq_add_coherent_gives_post:proof}) then $D_g$ must be the KL-divergence and so a valid coherent belief update must minimise \eqref{eq:big_post_loss}. 

An optimal belief update minimising \eqref{eq:big_post_loss} exists when $E_{\pi_0}(\exp(-l(\varphi, \theta;Y,Z,\pi_0)))$ exists,
and if this holds then the optimal valid and coherent belief update is the proper Gibbs posterior in \eqref{eq:genbayes_expost_firsttime}. 
The result of \cite{Bissiri2016} justifies the belief update in \eqref{eq:genbayes_expost_firsttime} for an additive loss $l(\varphi, \theta;Y,Z)$. Theorem~\ref{thm:preq_add_coherent_gives_post} below extends this to prequentially additive losses.

\begin{Theorem}\label{thm:preq_add_coherent_gives_post}
  If a loss $l$ is prequentially additive with respect to the belief update given by the Gibbs posterior, 
  \begin{equation}\label{eq:psi-q-in-valid-thm}
  \psi^{(q)}(l(\varphi, \theta;Y,Z,\pi_0),\pi_0)\propto \exp(-l(\varphi, \theta;Y,Z,\pi_0)) \pi_0(\varphi, \theta)
  \end{equation}
  then $\psi^{(q)}$ is order-coherent. It further holds that $L(\nu; Y,Z,\pi_0)$ in \eqref{eq:big_post_loss}
  is a valid loss yielding an order-coherent belief update and $\psi^{(q)}$ itself is the optimal valid order-coherent belief update $\psi$ in \eqref{eq:valid_belief_update}.
\end{Theorem}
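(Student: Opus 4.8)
The statement comprises three claims: order-coherence of the Gibbs update $\psi^{(q)}$; validity of the loss $L$ in \eqref{eq:big_post_loss} together with order-coherence of the belief update it induces; and the identification of $\psi^{(q)}$ as \emph{the} optimal valid order-coherent belief update of \eqref{eq:valid_belief_update}. The plan is to obtain the first claim by a short direct computation in which prequential additivity supplies exactly the loss decomposition needed for the normalising constants of the Gibbs posterior to cancel — the prior-dependence of the loss causes no difficulty here, since in a two-stage update Definition~\ref{def:preq_add_first} feeds the updated belief $\q_1$ into the loss as the new prior; to obtain the second claim from the first together with the Gibbs (Donsker--Varadhan) variational identity; and to obtain the third by transferring the uniqueness half of the \cite{Bissiri2016} characterisation. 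Throughout I retain the standing assumption that $E_{\pi_0}(\exp(-l(\varphi,\theta;Y,Z,\pi_0)))$ is finite.

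For order-coherence, by the remark after Definition~\ref{def:preq-order-coherent} it suffices to take a partition into $K=2$ blocks. Put $\q_1=\psi^{(q)}\crvBr{l(\varphi,\theta;Y^{(1)},Z^{(1)},\pi_0),\pi_0}$, so that by \eqref{eq:psi-q-in-valid-thm}, $\q_1(\varphi,\theta)=c_1^{-1}\exp\rndBr{-l(\varphi,\theta;Y^{(1)},Z^{(1)},\pi_0)}\pi_0(\varphi,\theta)$ with $c_1$ finite (a mild regularity condition to be assumed alongside the standing assumption; the normaliser on the right below is then automatically finite). Applying \eqref{eq:intro_preq_additivity} with $\q_0=\pi_0$ gives
\begin{equation*}
\exp\rndBr{-l(\varphi,\theta;Y,Z,\pi_0)}\pi_0(\varphi,\theta)=c_1\,\exp\rndBr{-l(\varphi,\theta;Y^{(2)},Z^{(2)},\q_1)}\,\q_1(\varphi,\theta),
\end{equation*}
and normalising both sides makes $c_1$ cancel, which is precisely \eqref{eq:preq-order-coherent}; hence $\psi^{(q)}$ is order-coherent.

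For validity and optimality, the loss $L(\nu;Y,Z,\pi_0)$ of \eqref{eq:big_post_loss} has the form $E_\nu(l)+D(\nu,\pi_0)$ with $D$ equal to $KL$, a $g$-divergence, and so is a valid loss in the sense adopted in Section~\ref{sec:preq-add-preq-co}. Its minimiser over measures absolutely continuous with respect to $\pi_0$ with finite expected loss is, by the Gibbs variational identity, proportional to $\exp\rndBr{-l(\varphi,\theta;Y,Z,\pi_0)}\pi_0(d\theta\,d\varphi)$ — that is, it is $\psi^{(q)}$ — and it is unique, since $\nu\mapsto KL(\nu||\pi_0)$ is strictly convex and $\nu\mapsto E_\nu(l)$ is linear. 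So $L$ induces the belief update $\psi^{(q)}$, which by the previous step is order-coherent.

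It remains to see that $\psi^{(q)}$ is the \emph{only} valid order-coherent belief update, hence the optimal one. Here I would invoke the \cite{Bissiri2016} argument recalled in Section~\ref{sec:preq-add-preq-co} and Appendix~\ref{sec:preq_add_coherent_gives_post:proof}: a valid belief update must minimise $E_\nu(l)+D_g(\nu,\pi_0)$ for some $g$-divergence $D_g$, and demanding order-coherence whatever the prior, parameter space, loss and data forces $D_g=KL$, hence forces the update to minimise \eqref{eq:big_post_loss} — which by the previous paragraph is $\psi^{(q)}$. The delicate point, and the step I expect to be the main obstacle, is that \cite{Bissiri2016} pin $D_g$ down using \emph{additive, prior-free} losses, whereas we work inside the strictly larger class of prequentially additive losses; I would resolve this by noting that an additive loss not depending on the prior is prequentially additive, so the instances driving their derivation of $D_g=KL$ remain available when order-coherence is demanded only over prequentially additive losses, and the conclusion carries over. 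Combining this with the order-coherence of $\psi^{(q)}$ established above identifies $\psi^{(q)}$ with the optimal valid order-coherent belief update $\psi$ of \eqref{eq:valid_belief_update}.
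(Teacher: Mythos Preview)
Your argument is correct, and the first two steps---order-coherence of $\psi^{(q)}$ via prequential additivity and the cancellation of $c_1$, and the identification of $\psi^{(q)}$ as the unique minimiser of \eqref{eq:big_post_loss} via the Gibbs (Donsker--Varadhan) variational formula---match the paper's proof almost verbatim.

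The difference is in the uniqueness step. The paper does not invoke the inclusion ``additive $\subset$ prequentially additive''; instead it re-runs the two-state construction of \cite{Bissiri2016} explicitly with a prequentially additive loss. It sets $I_1=(O^{(1)},\pi_0)$, $I_2=(O^{(2)},\q_1)$ with $\q_1$ the Gibbs posterior, uses prequential additivity to obtain $l(\xi;I)=l(\xi;I_1)+l(\xi;I_2)$, writes out the three first-order conditions for $p_1$, $p_2$ and the order-coherence constraint, and checks that they recombine to Equation~7 of the \cite{Bissiri2016} supplement, hence forcing $g(x)=kx\log x+(g'(1)-k)(x-1)$. Your route is shorter and treats the Bissiri necessity result as a black box: since the two-state counterexample already lives in the additive (hence prequentially additive) subclass, demanding order-coherence over the larger class is at least as restrictive and the conclusion $D_g=KL$ transfers immediately. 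The paper's route is more self-contained and makes visible that the two-state mechanism survives the passage from additivity to genuine prior-dependent prequential additivity---which is arguably the conceptual content of the theorem---but both arguments are sound.
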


\begin{proof}
See Appendix~\ref{sec:preq_add_coherent_gives_post:proof}.
\end{proof}

Having a prior-dependent loss gives the discussion of valid belief updates a circular feeling. Prequential additivity  replaces additivity to determine (with coherence) a unique valid belief update. However, prequential additivity depends for its definition on some predefined rule $\psi^{(q)}$ for updating belief from $\pi_0$ to $\q_1$ and so on. The question remaining is whether prequential additivity and the coherence requirement are enough to impose a unique valid belief update, and whether that valid belief update coincides with the belief update $\psi^{(q)}$ which ensured the loss was prequentially additive.

We consider the Cut model as a first example of how this may be used to show the validity of a given belief update. 

\begin{corollary} \label{cor:cutisvalid}
  The Cut-model belief update defined in \eqref{eq:cutpost} is the optimal, valid and coherent belief update for the loss in \eqref{eq:cut_loss}.
\end{corollary}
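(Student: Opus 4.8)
The plan is to deduce the corollary directly from Proposition~\ref{prop:cutmodeladditive} and Theorem~\ref{thm:preq_add_coherent_gives_post}; the only genuinely new step is to identify the Gibbs posterior for the Cut loss with the Cut-model posterior \eqref{eq:cutpost}.

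First I would check that the belief update $\psi^{(q)}$ appearing in Proposition~\ref{prop:cutmodeladditive}, namely the Gibbs posterior
\[
\psi^{(q)}(l^{(c)}(\varphi,\theta;Y,Z,\pi_0),\pi_0)\propto\exp(-l^{(c)}(\varphi,\theta;Y,Z,\pi_0))\,\pi_0(\varphi,\theta),
\]
is precisely $\p^{(c)}$ in \eqref{eq:cutpost}. Substituting \eqref{eq:cut_loss} gives $\exp(-l^{(c)})\pi_0 = p(Z\mid\varphi)p(Y\mid\varphi,\theta)\pi_0(\varphi,\theta)/p(Y\mid\varphi)$, which is \eqref{eq:cut_from_post}; factoring $\pi_0(\varphi,\theta)=\pi_0(\varphi)\pi_0(\theta\mid\varphi)$ and using \eqref{eq:marginal_Y_phi} to recognise $p(Y\mid\varphi)$ as the normalising constant of $\pi(\theta\mid Y,\varphi)$, this is proportional to $\pi(\varphi\mid Z)\,\pi(\theta\mid Y,\varphi)=\p^{(c)}(\varphi,\theta\mid Y,Z)$. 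Thus the Gibbs posterior in \eqref{eq:psi-q-in-valid-thm} with $l=l^{(c)}$ coincides with the Cut-model posterior.

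Next I would apply Theorem~\ref{thm:preq_add_coherent_gives_post} with $l=l^{(c)}$. Its hypothesis is exactly the conclusion of Proposition~\ref{prop:cutmodeladditive}: $l^{(c)}$ is prequentially additive with respect to the Gibbs posterior belief update \eqref{eq:psi-q-in-valid-thm}. The theorem then delivers that this Gibbs posterior is order-coherent, that $L(\nu;Y,Z,\pi_0)$ built from $l^{(c)}$ as in \eqref{eq:big_post_loss} is a valid loss yielding an order-coherent belief update, and that the Gibbs posterior is the optimal valid order-coherent belief update of \eqref{eq:valid_belief_update}. Combining this with the identification from the first step gives the claim: the Cut-model posterior \eqref{eq:cutpost} is the optimal, valid and coherent belief update for $l^{(c)}$.

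I do not expect a real obstacle here beyond the bookkeeping, since all the substance is carried by Proposition~\ref{prop:cutmodeladditive} and Theorem~\ref{thm:preq_add_coherent_gives_post}. The one side condition worth confirming is existence of the optimal belief update, i.e.\ that $E_{\pi_0}(\exp(-l^{(c)}(\varphi,\theta;Y,Z,\pi_0)))<\infty$; but the same substitution as above collapses to $E_{\pi_0}(\exp(-l^{(c)})) = \int p(Z\mid\varphi)\pi_0(\varphi)\,d\varphi = p(Z)$, which is finite whenever the $Z$-module marginal likelihood is, so this holds automatically.
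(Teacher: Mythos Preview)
Your proof is correct and follows the same route as the paper: invoke Proposition~\ref{prop:cutmodeladditive} to get prequential additivity of $l^{(c)}$ with respect to its Gibbs posterior, then apply Theorem~\ref{thm:preq_add_coherent_gives_post}. The paper's proof is terser, omitting both the explicit identification of the Gibbs posterior with $\p^{(c)}$ (already noted in the statement of Proposition~\ref{prop:cutmodeladditive}) and the existence check, but your added bookkeeping is correct and does no harm.
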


\begin{proof}
It is sufficient by Theorem~\ref{thm:preq_add_coherent_gives_post} that the loss \eqref{eq:cut_loss} is prequentially additive with respect to the belief update \eqref{eq:cutpost}. This follows from Proposition~\ref{prop:cutmodeladditive}.
\end{proof}

\section{Semi-Modular Inference} \label{sec:smi}

Having established the Gibbs posterior as the valid belief update for the Cut-model loss,
we now point to some other related belief updates for prequentially additive losses. These are variants of $\eta$-SMI, a family of belief updates introduced in \cite{carmona20}. 
We define three families of candidate posterior distributions interpolating between the full-Bayes posterior \eqref{eq:bayespost} and the Cut-model posterior in \eqref{eq:cutpost}. The idea here, following \cite{carmona20}, is to provide modulated input to the $\varphi$ inference from the $(\varphi,\theta,Y)$-module. In the next section we motivate this step in a bit more detail.

\subsection{The Cut model and Bayesian Multiple Imputation}
\label{sec:bmi_cut}

\cite{Plummer2015} explains that the Cut-model approach to inference using \eqref{eq:cutpost} is Bayesian Multiple Imputation (BMI), in essence a two-stage process: at the imputation stage the posterior distribution $\pi(\varphi\mid Z)$ of $\varphi$ is imputed from the data $Z$ as if $\varphi$ were missing data; in the analysis stage the posterior distribution $\pi(\theta|Y,\varphi)$ of $\theta$ is conditioned on the imputed $\varphi$ so that uncertainty in $\varphi$ is fed through into the distribution of $\theta$.

Bayesian inference \eqref{eq:bayespost} can also be given formally as a two stage imputation/analysis procedure,
\begin{equation}\label{eq:bayes_as_BMI}
    \pi(\varphi,\theta \mid Y, Z) =\pi(\varphi \mid Y, Z) \pi(\theta\mid Y, \varphi ),
\end{equation}
using the posterior marginal
\[
\pi(\varphi|Y,Z)\propto \pi(\varphi)p(Z|\varphi)p(Y|\varphi)
\]
in the imputation stage, with $p(Y|\varphi)$ the marginal likelihood in \eqref{eq:marginal_Y_phi}. If we did carry out Bayesian inference in this way, we would use the same model, $p(Y|\varphi,\theta)$, in both $\pi(\varphi \mid Y, Z)$ (imputation) and $\pi(\theta\mid Y, \varphi )$ (analysis). This is an imputation scheme \cite{Meng1994} calls ``congenial'', where it is appropriate for the imputation and analysis to be carried out using the same model. In Cut-model inference the imputation and analysis use different models for $\varphi$, as $p(Y|\varphi,\theta)$ is not used in the imputation. This may help in what \cite{Meng1994} calls ``uncongenial'' problems.

One negative feature of the Cut model is that it may remove too much information from the imputation for $\varphi$. This will often increase the posterior variance of $\varphi$ and $\theta$. In the context of hypothesis tests based on classical multiple imputation of missing data, \cite{Knuiman1998} refer to this as ``dilution'' off the effect due to ``imputation noise''. We may be happy to accept some dilution, if the bias due to misspecification is substantial. However if the $(\varphi,\theta,Y)$ module is only weakly misspecified, we may see a large increase in variance for just a small bias.


\subsection{Semi-Modular inference and Tempered SMI}

The \emph{$\gamma$-SMI posterior} family of candidate posteriors simply tempers from the Cut (at $\gamma=0$) to full-Bayes (at $\gamma=1$) via
\begin{align}\label{eq:gamma_smi_post_tmp}
  \p^{(t)}_{\gamma}( \varphi,\theta \mid Y, Z ) & \propto  \p^{(c)}(\varphi,\theta \mid Y, Z)^{(1-\gamma)} \pi(\varphi,\theta \mid Y, Z)^\gamma \\
                                                &\propto\pi(\varphi,\theta \mid Y, Z)/p(Y|\varphi)^{1-\gamma},
\end{align}
using \eqref{eq:cut_from_post} for the last line. 
The loss function for which it is a Gibbs posterior is
\begin{equation} \label{eq:tmsmi_loss}
  \l^{(t)}( \varphi,\theta ; Y, Z ) = \l^{(b)}( \varphi,\theta ; Y, Z ) + (1-\gamma)\log p(Y \mid \varphi).
\end{equation}
The $p( Y \mid \varphi )$ term is the loss-function weighting that down-weights the influence of $Y$ on $\varphi$. We show in Section~\ref{sec:prop:gam-eta-del_are_preq_add:proof} that this loss is prequentially additive with respect to the belief update in \eqref{eq:gamma_smi_post_tmp}. It follows from Theorem~\ref{thm:preq_add_coherent_gives_post} that $\p^{(t)}_{\gamma}( \varphi,\theta \mid Y, Z )$ in \eqref{eq:gamma_smi_post_tmp} is the optimal, valid and coherent belief update for the loss in \eqref{eq:tmsmi_loss}.

The $\gamma$-SMI posterior in \eqref{eq:gamma_smi_post_tmp} is attractive as a formally straightforward family of candidate posteriors encompassing Cut models and Bayesian inference. However it is very awkward computationally and in fact we have no idea how to implement it in practice.
We now give two alternative interpolating sequences of candidate posterior distributions. The first is $\eta$-SMI, given in \cite{carmona20}. We begin by introducing an auxiliary parameter $\tilde\theta$, expanding the model parameters from $(\varphi,\theta)$ to $(\varphi,\tilde\theta,\theta)$.
The \emph{$\eta$-SMI posterior} is
\begin{equation} \label{eq:eta_smi_post_pow}
  \p^{(s)}_{\eta}(\varphi,\tilde\theta,\theta|Y, Z) = \p^{(s)}_{\eta}(\varphi,\tilde\theta|Y, Z) \pi(\theta|Y,\varphi)
\end{equation}
where $\p^{(s)}_{\eta}( \varphi , \tilde\theta \mid Y, Z )$ is a kind of power posterior
\begin{equation}\label{eq:pow_post}
  \p^{(s)}_{\eta}(\varphi, \tilde\theta \mid Y, Z ) = \frac{p(Z|\varphi) p( Y \mid \varphi, \tilde \theta )^\eta \;  \pi(\varphi,\tilde\theta)}{\int p(Z|\varphi) p( Y \mid \varphi, \tilde \theta )^\eta \;  \pi(\varphi,\tilde\theta) d\tilde\theta d\varphi}.
\end{equation}
Several authors (for example, \cite{Miller2018a}) observe that $p( Y \mid \varphi, \tilde \theta )^\eta$ is not a normalised probability density in $Y$. The power posterior is not simply a posterior distribution with an extra parameter $\eta$.

We are interested in the marginal belief update for $\theta$ and $\varphi$, which is
\begin{align} \label{eq:eta_smi_marg_post}
  \p^{(s)}_{\eta}(\varphi,\theta|Y, Z) & = \left[\int \p^{(s)}_{\eta}(\varphi,\tilde\theta|Y, Z)d\tilde\theta\right] \pi(\theta|Y,\varphi)  \nonumber                           \\
                                       & \propto \pi(\varphi) p(Z|\varphi)E_{\tilde\theta|\varphi}\left[ p(Y|\varphi,\tilde\theta)^\eta\right]  \pi(\theta|Y,\varphi).
\end{align}
The tempering or $\gamma$-SMI posterior in \eqref{eq:gamma_smi_post_tmp} can be written in a similar way
\begin{equation} \label{eq:gamma_smi_marg_post_phitheta}
  \p^{(t)}_{\gamma}(\varphi,\theta|Y, Z) \propto \pi(\varphi) p(Z|\varphi)E_{\tilde\theta|\varphi}\left[ p(Y|\varphi,\tilde\theta)\right]^\gamma  \pi(\theta|Y,\varphi),
\end{equation}
so the order of raising the power and marginalising is swapped.

The $\eta$-SMI posterior distribution $\p^{(s)}_{\eta}( \varphi,\theta | Y, Z )$ interpolates between Bayes, $\p^{(s)}_{\eta=1}(\varphi,\theta | Y, Z ) = p( \varphi,\theta |Y, Z )$ and Cut,
$\p^{(s)}_{\eta=0}(\varphi,\theta | Y, Z ) = \p^{(c)}( \varphi,\theta |Y, Z )$ (take $\eta=0,1$ in \eqref{eq:eta_smi_marg_post} and compare with Equations \eqref{eq:cutpost} and \eqref{eq:bayes_as_BMI}).

The loss function for which the $\eta$-SMI family of belief updates are Gibbs posteriors is
\begin{equation} \label{eq:smi_loss}
  \l^{(s)}( (\varphi,\tilde\theta,\theta) ; Y, Z ) =  \l^{(b)}( \varphi,\theta ; Y, Z ) -\eta \log p(Y \mid \varphi,\tilde\theta)  + \log p(Y \mid \varphi).
\end{equation}
We show in Section~\ref{sec:prop:gam-eta-del_are_preq_add:proof} that this loss is prequentially additive with respect to its Gibbs posterior, so that belief update is again the optimal, valid and coherent belief update.

\subsection{Kernel-Smoothing $\delta$-SMI}

The third interpolating sequence of candidate distributions we describe is constructed by taking a different relaxation of the likelihood. For $y,\y\in \R$ let $K_\delta(y,\y)$ be a normalised kernel. We focus on the cases $K_\delta(y,\y)=N(y-\y;0,\delta^2)$ and $K_\delta=(2\delta)^{-1}\I_{|y-\y|<\delta}$. For $y,\y\in \R^n$ we define 
\begin{equation}\label{eq:ks-kernel}
  K_\delta(y,\y)=\prod_{i=1}^n K_\delta(y_i,\y_i)
\end{equation}
and 
\begin{equation}\label{eq:ks-smooth-lkd}
  p_\delta( Y \mid \varphi, \tilde \theta )=\int p( \Y \mid \varphi, \tilde \theta ) K_\delta(Y,\Y)d\Y.
\end{equation}
Notice that if
\[
  p( Y \mid \varphi, \tilde \theta )=\prod_{i=1}^n p( Y_i \mid \varphi, \tilde \theta )
\]
then
\begin{equation}\label{eq:ks-smooth-lkd-prod}
  p_\delta( Y \mid \varphi, \tilde \theta )=\prod_{i=1}^n p_\delta( Y_i \mid \varphi, \tilde \theta )
\end{equation}
with
\begin{equation}\label{eq:ks-smooth-lkd-onesample}
    p_\delta( Y_i \mid \varphi, \tilde \theta )=\int p( \Y_i \mid \varphi, \tilde \theta )K_{\delta}(Y_i,\Y_i)d\Y_i.
\end{equation}
We define the \emph{$\delta$-SMI posterior} as
\begin{equation} \label{eq:kssmi_def}
  \p^{(k)}_{\delta}(\varphi,\tilde\theta,\theta|Y, Z) = \pi^{(k)}_{\delta}(\varphi,\tilde\theta|Y, Z) \pi(\theta|Y,\varphi)
\end{equation}
where $\pi^{(k)}_{\delta}( \varphi , \tilde\theta \mid Y, Z )$ is the kernel-smoothed posterior
\begin{equation}\label{eq:phi-side-ks-smi_def}
  \pi^{(k)}_{\delta}(\varphi , \tilde\theta \mid Y, Z ) =\frac{p(Z|\varphi) p_\delta( Y \mid \varphi, \tilde \theta ) \;  \pi(\varphi,\tilde\theta)}{\int p(Z|\varphi) p_\delta( Y \mid \varphi, \tilde \theta ) \;  \pi(\varphi,\tilde\theta) d\tilde\theta d\varphi}
\end{equation}
with $p_\delta( Y \mid \varphi, \tilde \theta )$ defined in \eqref{eq:ks-smooth-lkd}. 
We show in Section~\ref{sec:oc} that the $\delta$-SMI family of belief updates defined in \eqref{eq:kssmi_def} are valid for the loss,
\begin{equation} \label{eq:kssmi_loss}
  \l^{(k)}( (\varphi,\tilde\theta,\theta) ; Y, Z ) =  \l^{(b)}( \varphi,\theta ; Y, Z ) - \log p_\delta(Y \mid \varphi,\tilde\theta)  + \log p(Y \mid \varphi).
\end{equation}

\subsubsection{Interpretation of $\delta$-SMI as a generalised Cut model} In contrast to the likelihood relaxation $p( Y \mid \varphi, \tilde \theta )^\eta$ appearing in $\eta$-SMI, the likelihood $p_\delta( Y_i \mid \varphi, \tilde \theta )$ is a normalised density for $Y$, so the $\varphi,\tilde\theta$-posterior $\pi^{(k)}_{\delta}(\varphi , \tilde\theta \mid Y, Z )$ is a conditional probability (and so we write $\pi^{(k)}_{\delta}$ here). However, $\delta$-SMI as a whole is not simply Bayesian inference with some simple model elaboration.
The joint $\delta$-SMI posterior is in fact a cut model for an enlarged model with three modules.
The three data sets are $Y,Z$ and $Y'=Y$, the new copy of $Y$ present in the imputation stage for $\varphi$. The generative models for these three modules are $(\varphi,\tilde\theta,Y')\sim \pi(\varphi,\tilde\theta)p_\delta(Y'|\varphi,\tilde\theta)$, $(\varphi,Z)\sim \pi(\varphi)p(Z|\varphi)$ and $(\varphi,\theta,Y)\sim \pi(\varphi,\theta)p(Y|\varphi,\theta)$; the feedback from the final $\varphi,\theta,Y$ module into the $\varphi,\tilde\theta,Y'$ and $\varphi, Z$ modules has been cut. The posterior for the imputation stage is $\pi^{(k)}_{\delta}(\varphi,\tilde\theta|Y',Z)$ (with $Y'=Y$) and the posterior for the analysis stage is $\pi(\theta|Y,\varphi)$. 
This Cut-model interpretation does not hold for $\eta$-SMI, as $\p^{(s)}_\eta(\varphi,\tilde\theta|Y,Z)$ is not a posterior defined by Bayes rule, as $p(Y|\varphi,\tilde\theta)^\eta$ is not a normalised probability density.


\subsubsection{Comparison with $\eta$-SMI} We can display the relation between the marginal $\delta$-SMI posterior and the marginal $\eta$-SMI posterior. The marginal $\delta$-SMI posterior can be written
\begin{equation}\label{eq:kssmi_marginal_phitheta}
  \p^{(k)}_{\delta}(\varphi,\theta|Y, Z) \propto \pi(\varphi) p(Z|\varphi)E_{\tilde\theta|\varphi}\left[ p_\delta(Y|\varphi,\tilde\theta)\right]  \pi(\theta|Y,\varphi),
\end{equation}
so the $\delta$-SMI posterior looks like the $\eta$-SMI posterior in \eqref{eq:eta_smi_marg_post}, with prior expectation of the down-weighted likelihood $p_\delta(Y|\varphi,\tilde\theta)$ for the former and $p(Y|\varphi,\tilde\theta)^\eta$ in the later. 

\subsubsection{$\delta$-SMI interpolation of Bayes and Cut} Like $\eta$-SMI, the family of distributions indexed by $\delta$ interpolates between the Cut model and the Bayes posterior.

\begin{proposition} \label{prop:ks-smi-interpolates}
($\delta$-SMI interpolation)
If 
$\lim_{\delta\to 0}p_\delta(Y|\varphi,\theta)=p(Y|\varphi,\theta)$
and 
\begin{equation}\label{eq:ks-interpolation-DeltaInf}
\lim_{\delta\to \infty}\frac{p_\delta(Y|\varphi,\theta)}{p_\delta(Y|\varphi',\theta)}=1    
\end{equation}
for every $\varphi,\varphi'$
then the $\delta$-SMI posterior
$\p^{(k)}_{\delta}(\varphi,\theta | Y, Z )$ interpolates between Bayesian inference at $\delta=0$ and Cut-model inference as $\delta\rightarrow\infty$, that is
\[\lim_{\delta\rightarrow 0}\p^{(k)}_{\delta}(\varphi,\theta | Y, Z ) = \pi(\varphi,\theta |Y, Z )\]
and
\[
  \lim_{\delta\rightarrow\infty}\p^{(k)}_{\delta}(\varphi,\theta | Y, Z ) = \p^{(c)}( \varphi,\theta |Y, Z ).
\]
\end{proposition}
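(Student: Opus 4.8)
The plan is to reduce both limits to the behaviour of a single scalar function of $\varphi$, namely the kernel-smoothed marginal likelihood $p_\delta(Y\mid\varphi):=E_{\tilde\theta\mid\varphi}\!\left[p_\delta(Y\mid\varphi,\tilde\theta)\right]=\int p_\delta(Y\mid\varphi,\tilde\theta)\,\pi_0(\tilde\theta\mid\varphi)\,d\tilde\theta$, the $\delta$-analogue of $p(Y\mid\varphi)$ in \eqref{eq:marginal_Y_phi}. By \eqref{eq:kssmi_marginal_phitheta} and $\int\pi(\theta\mid Y,\varphi)\,d\theta=1$ we may write $\p^{(k)}_{\delta}(\varphi,\theta\mid Y,Z)=c_\delta^{-1}\,\pi(\varphi)\,p(Z\mid\varphi)\,p_\delta(Y\mid\varphi)\,\pi(\theta\mid Y,\varphi)$ with $c_\delta=\int\pi(\varphi')\,p(Z\mid\varphi')\,p_\delta(Y\mid\varphi')\,d\varphi'$; thus the analysis-stage factor $\pi(\theta\mid Y,\varphi)$ is common to the Bayes posterior \eqref{eq:bayes_as_BMI} and the Cut posterior \eqref{eq:cutpost}, and everything comes down to the limiting shape of $\varphi\mapsto p_\delta(Y\mid\varphi)$, up to $\varphi$-independent factors, together with the limit of $c_\delta$.

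For $\delta\to0$: the first hypothesis gives $p_\delta(Y\mid\varphi,\tilde\theta)\to p(Y\mid\varphi,\tilde\theta)$, and passing this through the $\tilde\theta$-expectation (dominated convergence for the Gaussian kernel, where $p_\delta$ is a convolution, or for the uniform kernel, where it is a local average — in each case dominated by an envelope integrable against $\pi_0(\tilde\theta\mid\varphi)$ under the mild integrability we assume) gives $p_\delta(Y\mid\varphi)\to p(Y\mid\varphi)$, and likewise $c_\delta\to\int\pi(\varphi')p(Z\mid\varphi')p(Y\mid\varphi')\,d\varphi'$. Hence $\p^{(k)}_{\delta}(\varphi,\theta\mid Y,Z)\to\pi(\varphi)p(Z\mid\varphi)p(Y\mid\varphi)\pi(\theta\mid Y,\varphi)\big/\!\int\pi(\varphi')p(Z\mid\varphi')p(Y\mid\varphi')\,d\varphi'$, which is $\pi(\varphi\mid Y,Z)\pi(\theta\mid Y,\varphi)=\pi(\varphi,\theta\mid Y,Z)$ by \eqref{eq:bayes_as_BMI}.

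For $\delta\to\infty$: fix a reference value $\varphi_0$. Condition \eqref{eq:ks-interpolation-DeltaInf} gives $p_\delta(Y\mid\varphi,\tilde\theta)/p_\delta(Y\mid\varphi_0,\tilde\theta)\to1$ pointwise in $\tilde\theta$. The goal is to deduce $p_\delta(Y\mid\varphi)/p_\delta(Y\mid\varphi_0)\to1$ for every $\varphi$; when the conditional prior does not depend on $\varphi$ (or more generally when $E_{\tilde\theta\mid\varphi}$ can be compared to a common reference measure) this follows by writing $p_\delta(Y\mid\varphi)/p_\delta(Y\mid\varphi_0)$ as an expectation of the ratio $p_\delta(Y\mid\varphi,\tilde\theta)/p_\delta(Y\mid\varphi_0,\tilde\theta)$ against a probability measure proportional to $p_\delta(Y\mid\varphi_0,\tilde\theta)\,\pi_0(\tilde\theta\mid\cdot)$ and applying dominated convergence; for the two named kernels one may instead note directly that $p_\delta(Y\mid\varphi,\tilde\theta)=(2\pi\delta^2)^{-n/2}(1+o(1))$ (Gaussian) or $(2\delta)^{-n}(1+o(1))$ (uniform), with the $o(1)$ free of $(\varphi,\tilde\theta)$, so $p_\delta(Y\mid\varphi)$ equals the same factor times $1+o(1)$. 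Either way $p_\delta(Y\mid\varphi)=p_\delta(Y\mid\varphi_0)(1+o(1))$ uniformly enough to pass the limit through the $\varphi$-integral in $c_\delta$; the $\varphi$-independent factor $p_\delta(Y\mid\varphi_0)$ cancels, and $\p^{(k)}_{\delta}(\varphi,\theta\mid Y,Z)\to\pi(\varphi)p(Z\mid\varphi)\pi(\theta\mid Y,\varphi)\big/\!\int\pi(\varphi')p(Z\mid\varphi')\,d\varphi'=\pi(\varphi\mid Z)\pi(\theta\mid Y,\varphi)$, i.e. the Cut posterior \eqref{eq:cutpost}.

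The routine parts are the three interchanges of limit and integral (through $E_{\tilde\theta\mid\varphi}$ and through the normalising $\varphi$-integral, in each direction), which the Gaussian and uniform kernels make straightforward once an integrable envelope is in hand; I would state the minimal integrability hypothesis that makes this rigorous. The one genuinely delicate point, and the place I expect hidden assumptions to surface, is the $\delta\to\infty$ step: killing the $\varphi$-dependence of $p_\delta(Y\mid\varphi,\tilde\theta)$ via \eqref{eq:ks-interpolation-DeltaInf} does not by itself kill the $\varphi$-dependence of $p_\delta(Y\mid\varphi)=E_{\tilde\theta\mid\varphi}[\,\cdot\,]$ unless the conditional prior $\pi_0(\tilde\theta\mid\varphi)$ is itself $\varphi$-free (as is typical in SMI) or the stronger kernel-specific statement above is used; clarifying which of these is being invoked is the main thing to get right.
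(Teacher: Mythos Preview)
Your approach is essentially the paper's: both arguments take limits in the marginal representation \eqref{eq:kssmi_marginal_phitheta} and compare the result to \eqref{eq:bayes_as_BMI} and \eqref{eq:cutpost}. The paper's proof is only a two-sentence sketch (``take the limits\ldots and compare''), so your version is strictly more detailed, not different in spirit. In particular, your observation that the ratio condition \eqref{eq:ks-interpolation-DeltaInf} is stated for fixed $\theta$ and therefore does not by itself force $p_\delta(Y\mid\varphi)/p_\delta(Y\mid\varphi_0)\to1$ when $\pi_0(\tilde\theta\mid\varphi)$ genuinely depends on $\varphi$ is a subtlety the paper does not address; the paper simply asserts that \eqref{eq:ks-interpolation-DeltaInf} ``ensures that the limit of the posterior $\p_\delta$ exists and is equal to the cut model'' without separating the $\tilde\theta$-integration from the $\varphi$-dependence of the prior.
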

\begin{proof} 
Take the limits at $\delta=0$ and $\delta=1$ in \eqref{eq:kssmi_marginal_phitheta} using the stated behavior of $p_\delta(Y|\varphi,\theta)$ and compare against Equations \eqref{eq:cutpost} and \eqref{eq:bayes_as_BMI}. The likelihood $p_\delta(Y|\varphi,\theta)$ itself is improper at $\delta=\infty$. The condition \eqref{eq:ks-interpolation-DeltaInf} at $\delta\to\infty$ ensures that the limit of the posterior $\p_\delta$ exists and is equal to the cut model.
\end{proof}
 The conditions on the kernel smoothed likelihood in Proposition~\ref{prop:ks-smi-interpolates} restrict the choice of kernel $K_\delta$ in \eqref{eq:ks-kernel}. They are easily satisfied. For example, if the kernel $K_\delta$ is the top hat kernel and $Y_1,...,Y_n$ have a continuous density $p(Y_i|\tilde\theta,\varphi)$ then under the integral in \eqref{eq:ks-smooth-lkd}, we have $K_\delta(y,dy')\to \delta_{y}(dy')$ (the Dirac delta-function) as $\delta\to 0$ in the sense of a distribution, and $p_\delta(Y|\tilde\theta,\varphi)\to p(Y|\tilde\theta,\varphi)$. If $Y$ is discrete then for all sufficiently small $\delta$, the set $\{Y': |Y'_i-Y_i|\le \delta,\ i=1,...,n\}$ contains only $Y$ so $p_\delta(Y|\tilde\theta,\varphi)=p(Y|\tilde\theta,\varphi)$ for all sufficiently small $\delta$. Condition \eqref{eq:ks-interpolation-DeltaInf} also holds for the top-hat kernel. For example, for continuous real scalar data and $i=1,...,n$, $p_\delta(Y_i|\varphi,\tilde\theta)=(1-\epsilon_{\delta}(Y_i))/(2\delta)$ for some $\epsilon_{\delta}(Y_i)\to 0$ with $\delta\to\infty$ for any fixed data value $Y_i$ and so ratios tend to one.

\subsection{Targeting the $\delta$-SMI posterior} \label{sec:cut_mcmc}

\cite{carmona20} use the nested MCMC algorithm of \cite{Plummer2015} to target the $\eta$-SMI posterior $\p^{(s)}_{\eta}(\varphi,\theta|Y, Z)$. Here we show that similar methods can be setup to sample $\p^{(k)}_{\delta}(\varphi,\theta|Y, Z)$. \cite{Liu2020sacut} give an efficient approximation scheme which speeds up analysis within the same nested-MCMC framework.

We may not be able to compute the $\delta$-SMI likelihood
to $p_\delta(Y|\varphi, \tilde\theta)$. However we can treat the kernel $K_\delta$ as a probability density over ``missing'' data $\Y$, writing
\begin{equation}
  p_\delta(Y,\Y|\varphi, \tilde\theta)=K_\delta(Y,\Y)p(\Y|\varphi,\tilde\theta)
\end{equation}
so that the marginal obtained when we integrate over $\Y$ is $p_\delta(Y|\varphi, \tilde\theta)$ in \eqref{eq:ks-smooth-lkd}.
The extended posterior with auxiliary variables for the missing data is
\begin{equation}
  \p^{(k)}_{\delta}(\varphi,\tilde\theta,\theta,\Y|Y,Z)\propto \pi^{(k)}_{\delta}(\varphi, \tilde\theta,\Y|Y,Z)\pi(\theta|Y,\varphi)
\end{equation}
where
\[
  \pi^{(k)}_{\delta}(\varphi, \tilde\theta,\Y|Y,Z)\propto p(Z|\varphi)p_\delta(Y,\Y|\varphi, \theta)\pi(\varphi, \tilde\theta).
\]
The nested MCMC approach targets
\[
  \varphi, \tilde\theta,\Y\sim \pi^{(k)}_{\delta}(\varphi, \tilde\theta,\Y|Y,Z)
\]
using standard MCMC. Marginally then,
\[
  \varphi\sim \pi^{(k)}_{\delta}(\varphi|Y,Z).
\]
We take this simulated $\varphi$ and sample
\[
  \theta|\varphi\sim \pi(\theta|Y,\varphi)
\]
using standard MCMC. This gives a pair $(\varphi,\theta)$ distributed according to $\p^{(k)}_{\delta}(\varphi,\theta|Y,Z)$.
We do not use this Monte Carlo method below. In the main HPV-data example in Section~\ref{sec:hpv_analysis} below the likelihood $p_\delta(Y|\varphi, \tilde\theta)$ is given in terms of the CDF of a Poisson distribution and is readily evaluated.

The downside of this approach is that it suffers from ``double asymptotics". We run one MCMC chain generating samples from $\pi^{(k)}_{\delta}(\varphi|Y,Z)$. For each sample $\varphi$ output in this run we simulate a chain targeting $\pi(\theta|Y,\varphi)$ and take the last sampled $\theta$-value. This second chain must run to convergence. Whilst in our experience very high accuracy can be achieved in a modest runtime, of the order of ten times the runtime of the chain targeting the Bayes posterior $\pi(\varphi,\theta|Y,Z)$ for the same ESS \citep{carmona20}, this is clearly a weakness of this scheme. It may be preferable to analyse the $\delta$-SMI posterior using the variational framework of \cite{Yu2021variationalcut} and \cite{Carmona2021variationalSMI}.

\subsection{SMI and Bayesian Multiple Imputation}

Some of the forms of SMI listed above are variants of BMI in which we use information from the $Y$-module to inform the $\varphi$ imputation. This is the case for $\eta$-SMI and $\delta$-SMI. From a BMI perspective these SMI variants are simply trying to make the best possible imputation of $\varphi$ using the available information. The parameters $\eta$ and $\delta$ will be set to values that allow the right amount of information to flow back from $(\varphi,\theta,Y)$-module to influence the $\varphi$ imputation. The choice of these values is discussed in Section~\ref{sec:opt_eta}.
However, $\gamma$-SMI cannot be setup as BMI, at least in any computationally tractable way as it cannot be written as a suitable product of conditional probabilities.

\section{Properties of SMI}\label{sec:properties_of_SMI}

In this section we show that the different forms of SMI we have written down are all valid belief updates. We then give criteria and estimation procedures defining and computing an optimal $\delta$.

\subsection{Validity of new SMI variants} \label{sec:oc}

\cite{carmona20} show that $\eta$-SMI is order-coherent. The proof that its loss is prequentially additive is based on similar reasoning. We now extend these results to $\gamma$-SMI and $\delta$-SMI.

\begin{corollary} \label{cor:gam-eta-del_are_valid}
  The $\gamma$-SMI, $\eta$-SMI and $\delta$-SMI belief updates given respectively in \eqref{eq:gamma_smi_post_tmp}, \eqref{eq:eta_smi_post_pow} and \eqref{eq:kssmi_def} are the optimal, valid and coherent belief updates for their respective associated loss (see \eqref{eq:tmsmi_loss}, \eqref{eq:smi_loss} and \eqref{eq:kssmi_loss} respectively).
\end{corollary}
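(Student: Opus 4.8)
The plan is to invoke Theorem~\ref{thm:preq_add_coherent_gives_post} in each of the three cases, so the whole argument reduces to two checks: (i) that each SMI posterior is exactly the Gibbs posterior $\propto\exp(-l)\pi_0$ for its stated loss; and (ii) that each stated loss is prequentially additive with respect to that Gibbs posterior. Check (i) is a matter of substituting \eqref{eq:tmsmi_loss}, \eqref{eq:smi_loss} and \eqref{eq:kssmi_loss} into \eqref{eq:psi-q-in-valid-thm}. For $\gamma$-SMI this reproduces \eqref{eq:gamma_smi_post_tmp} directly via \eqref{eq:cut_from_post}. For $\eta$-SMI and $\delta$-SMI one works on the enlarged parameter $(\varphi,\tilde\theta,\theta)$ with joint prior $\pi_0(\varphi)\pi_0(\tilde\theta\mid\varphi)\pi_0(\theta\mid\varphi)$ (so $\tilde\theta$ is a conditionally independent copy of $\theta$ given $\varphi$), and the exponentiated loss factorises into the $(\varphi,\tilde\theta)$-side power, resp.\ kernel-smoothed, posterior times $\pi(\theta\mid Y,\varphi)$, matching \eqref{eq:eta_smi_post_pow} and \eqref{eq:kssmi_def}. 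I would also note that these Gibbs posteriors are proper under mild conditions: the $\theta$-integral of the exponentiated loss against $\pi_0(\theta\mid\varphi)$ contributes a factor $1$, so $E_{\pi_0}(\exp(-l))$ is finite whenever the corresponding $(\varphi,\tilde\theta)$-side normalising constant is, and the optimal update in \eqref{eq:valid_belief_update} exists.

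The substance is check (ii), and here I would follow exactly the route used for the Cut model in Proposition~\ref{prop:cutmodeladditive}. Fix a partition $Y^{(1:2)},Z^{(1:2)}$. Each loss has the form $\l^{(b)}(\varphi,\theta;Y,Z)$ plus $c\log p(Y\mid\varphi)$ — with $c=1-\gamma$ for $\gamma$-SMI and $c=1$ for $\eta$- and $\delta$-SMI — plus, in the $\eta$- and $\delta$-SMI cases only, a relaxation term $-\eta\log p(Y\mid\varphi,\tilde\theta)$, resp.\ $-\log p_\delta(Y\mid\varphi,\tilde\theta)$. The Bayes part $\l^{(b)}$ is additive by Definition~\ref{def:additivity_first}; the relaxation terms are also additive over the partition because $p(Y\mid\varphi,\tilde\theta)$ and, by \eqref{eq:ks-smooth-lkd-prod}, $p_\delta(Y\mid\varphi,\tilde\theta)$ factorise over data points. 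So everything reduces to the telescoping identity $\log p(Y\mid\varphi)=\log p(Y^{(1)}\mid\varphi)+\log p(Y^{(2)}\mid\varphi;\q_1)$, where $p(Y^{(2)}\mid\varphi;\q_1)=\int p(Y^{(2)}\mid\varphi,\theta)\,\q_1(\theta\mid\varphi)\,d\theta$.

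The key observation — the one underlying the order-coherence proof in \cite{carmona20} — is that all the extra weighting factors in $\q_1$ beyond $p(Y^{(1)}\mid\varphi,\theta)\pi_0$ depend only on $\varphi$ (and, on the enlarged space, on $\tilde\theta$), so they cancel from the conditional law of $\theta$ given $\varphi$: one computes $\q_1(\theta\mid\varphi)=\pi(\theta\mid Y^{(1)},\varphi)=p(Y^{(1)}\mid\varphi,\theta)\pi_0(\theta\mid\varphi)/p(Y^{(1)}\mid\varphi)$. Substituting and using conditional independence of $Y^{(1)}$ and $Y^{(2)}$ given $(\varphi,\theta)$ gives $p(Y^{(2)}\mid\varphi;\q_1)=p(Y\mid\varphi)/p(Y^{(1)}\mid\varphi)$, which is the telescoping identity. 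Iterating over a general $K$-fold partition (empty blocks being harmless) yields \eqref{eq:intro_preq_additivity} for each loss, and Theorem~\ref{thm:preq_add_coherent_gives_post} then delivers the claimed optimality, validity and order-coherence. The one place needing a little care — and the only step that is not a verbatim copy of Proposition~\ref{prop:cutmodeladditive} — is the bookkeeping on the enlarged space for $\eta$- and $\delta$-SMI: one must confirm that $\q_1(\theta\mid\varphi)$ does not depend on $\tilde\theta$, which holds because the $(\varphi,\tilde\theta)$-side factor contains no $\theta$, so that $p(Y^{(2)}\mid\varphi;\q_1)$ is evaluated against the same ordinary posterior conditional as in the Cut-model case.
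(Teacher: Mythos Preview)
Your proposal is correct and follows essentially the same route as the paper: invoke Theorem~\ref{thm:preq_add_coherent_gives_post} after verifying that each SMI posterior is the Gibbs posterior for its stated loss and that each loss is prequentially additive, the latter via the same decomposition (additive Bayes part, additive relaxation term, and the telescoping identity for $\log p(Y\mid\varphi)$ driven by $\q_1(\theta\mid\varphi)=\pi(\theta\mid Y^{(1)},\varphi)$) that the paper carries out in Proposition~\ref{prop:gam-eta-del_are_preq_add}. Your added remarks on propriety and on the enlarged-space bookkeeping for $(\varphi,\tilde\theta,\theta)$ are sound elaborations but do not change the argument.
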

\begin{proof}
Since the losses are obtained from the corresponding Gibbs posteriors, it is sufficient by Theorem~\ref{thm:preq_add_coherent_gives_post} that these losses are prequentially additive with respect to their associated belief updates. This follows from Propostion~\ref{prop:gam-eta-del_are_preq_add} below.
\end{proof}

\begin{proposition} \label{prop:gam-eta-del_are_preq_add}
The loss functions for $\gamma$-SMI, $\eta$-SMI and $\delta$-SMI given respectively in \eqref{eq:tmsmi_loss}, \eqref{eq:smi_loss} and \eqref{eq:kssmi_loss} are prequentially additive with respect to the belief updates given respectively in \eqref{eq:gamma_smi_post_tmp}, \eqref{eq:eta_smi_post_pow} and \eqref{eq:kssmi_def}.
\end{proposition}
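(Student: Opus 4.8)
The plan is to verify the defining identity \eqref{eq:intro_preq_additivity} directly for an arbitrary partition $Y^{(1:K)},Z^{(1:K)}$, by splitting each of the three losses into pieces that are separately (prequentially) additive. Write each loss as $l^{(b)}(\varphi,\theta;Y,Z) + A(Y\mid\varphi,\tilde\theta) + c\,\log p(Y\mid\varphi)$, where for $\gamma$-SMI there is no auxiliary parameter, $A\equiv 0$ and $c=1-\gamma$; for $\eta$-SMI, $A=-\eta\log p(Y\mid\varphi,\tilde\theta)$ and $c=1$; and for $\delta$-SMI, $A=-\log p_\delta(Y\mid\varphi,\tilde\theta)$ and $c=1$. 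The Bayes term $l^{(b)}$ is additive in the sense of Definition~\ref{def:additivity_first} because $Y_1,\dots,Y_n\mid\varphi,\theta$ and $Z_1,\dots,Z_m\mid\varphi$ are conditionally iid; the term $A$ is likewise additive because $p(Y\mid\varphi,\tilde\theta)=\prod_i p(Y_i\mid\varphi,\tilde\theta)$ and, by \eqref{eq:ks-smooth-lkd-prod}, $p_\delta(Y\mid\varphi,\tilde\theta)=\prod_i p_\delta(Y_i\mid\varphi,\tilde\theta)$. Neither $l^{(b)}$ nor $A$ involves the prior, so each is unchanged when $\pi_0$ is replaced by $\q_{k-1}$ and each sums over the blocks of the partition to its bulk value. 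Everything therefore reduces to showing that the single prior-dependent term $\log p(Y\mid\varphi)=\log\int p(Y\mid\varphi,\theta)\pi_0(\theta\mid\varphi)\,d\theta$ is prequentially additive with respect to each belief update.

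For that I would first record the key structural fact that the $\theta$-conditional of $\q_k:=\psi^{(q)}(l(\cdot;Y^{(1:k)},Z^{(1:k)},\pi_0),\pi_0)$ is the ordinary Bayes posterior $\pi(\theta\mid Y^{(1:k)},\varphi)$ for each of the three belief updates. This is immediate from the product forms \eqref{eq:gamma_smi_post_tmp} (via \eqref{eq:bayes_as_BMI}), \eqref{eq:eta_smi_post_pow} and \eqref{eq:kssmi_def}: each Gibbs posterior for the first $k$ data blocks factorises as a density in $(\varphi,\tilde\theta)$ (or just $\varphi$, for $\gamma$-SMI) times the factor $\pi(\theta\mid Y^{(1:k)},\varphi)$, and since that factor integrates to one in $\theta$ and does not depend on $\tilde\theta$ it is exactly the $\theta$-conditional of $\q_k$. (Here I use the natural expanded prior in which $\tilde\theta$ is an independent copy of $\theta$ given $\varphi$, so that the prior-reference in the loss, which enters only through $\pi_0(\theta\mid\varphi)$ in \eqref{eq:marginal_Y_phi}, is well defined after the substitution $\pi_0\mapsto\q_{k-1}$.) Consequently that substitution in the $k$-th block's loss turns $\log p(Y^{(k)}\mid\varphi)$ into $\log\int p(Y^{(k)}\mid\varphi,\theta)\pi(\theta\mid Y^{(1:k-1)},\varphi)\,d\theta=\log\big(p(Y^{(1:k)}\mid\varphi)/p(Y^{(1:k-1)}\mid\varphi)\big)$, using conditional independence of the $Y$-blocks given $(\varphi,\theta)$; with $\q_0=\pi_0$ the $k=1$ term is $\log p(Y^{(1)}\mid\varphi)$, and summing over $k$ telescopes to $\log p(Y^{(1:K)}\mid\varphi)=\log p(Y\mid\varphi)$.

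Adding the summed pieces then reproduces exactly the bulk loss $l^{(b)}(\varphi,\theta;Y,Z)+A(Y\mid\varphi,\tilde\theta)+c\log p(Y\mid\varphi)$, i.e.\ \eqref{eq:tmsmi_loss}, \eqref{eq:smi_loss} or \eqref{eq:kssmi_loss} respectively, establishing \eqref{eq:intro_preq_additivity}; empty blocks contribute zero and leave $\q_k=\q_{k-1}$, so they cause no difficulty. I expect the main point requiring care to be the bookkeeping around the auxiliary parameter $\tilde\theta$ in $\eta$- and $\delta$-SMI: checking that the $\theta$-conditional of $\q_k$ really is independent of $\tilde\theta$ (so the substitution $\pi_0\mapsto\q_{k-1}$ in $\log p(Y\mid\varphi)$ gives the Bayes predictive in $\theta$, not something built from the updated $\tilde\theta$-marginal), and that $p(Y\mid\varphi,\tilde\theta)^\eta$ and $p_\delta(Y\mid\varphi,\tilde\theta)$ factorise cleanly over the partition — both follow from the conditional-iid structure and \eqref{eq:ks-smooth-lkd-prod}. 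The argument generalises the proof of Proposition~\ref{prop:cutmodeladditive} (the $\gamma=0$ case) and parallels the order-coherence proof of \cite{carmona20}.
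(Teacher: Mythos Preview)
Your proposal is correct and follows essentially the same approach as the paper's proof: both reduce the task to the observation that the $\theta$-conditional of each SMI belief update is the ordinary Bayes posterior $\pi(\theta\mid Y^{(1:k)},\varphi)$, which makes the prior-dependent term $\log p(Y\mid\varphi)$ telescope, while the remaining terms are ordinarily additive. The paper handles only the case $K=2$ (noting this suffices) and writes out each variant under a brace; your unified decomposition $l^{(b)}+A+c\log p(Y\mid\varphi)$ and direct telescoping for general $K$ are a tidy repackaging of the same argument.
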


\begin{proof}
See Appendix~\ref{sec:prop:gam-eta-del_are_preq_add:proof}.
\end{proof}


\subsection{Asymptotic behaviour of $\delta$-SMI} \label{sec:asymptotics}

In Bayesian inference a family of densities $\mathbb{P}_\Omega=\{ p(\cdot|\varphi,\theta): (\varphi,\theta) \in\Omega \}$ with parameter space $\Omega$ is specified for unknown parameters $\theta,\varphi$ and belief about the true parameters $(\theta^*,\varphi^*)$ is updated by the observed data using Bayes' rule. If the model is well specified $p^*\in \mathbb{P}_\Omega$, then under regularity conditions, the posterior concentrates at the true parameter values as the number of observations increases. If the parametric model is misspecified $p^*\not\in \mathbb{P}_\Omega$ then, under regularity conditions, the posterior concentrates at the pseudo-true parameter values minimizing the Kullback-Leibler divergence between $p^*$ and $p(\cdot|\varphi,\theta)$ \citep{Berk1966}.
In these settings the Maximum Likelihood Estimator (MLE) is a natural estimator for the parameters minimising the Kullback-Leibler divergence \citep{Akaike1973}. The pseudo-truth is given by the limiting MLE taken on large data. The asymptotic behaviour of the Bayes posterior distribution for misspecified parametric models is considered in \citet{KleijnVaart2012}. A covariance matrix guaranteeing the correct asymptotic Frequentist coverage of the pseudo-true parameters was given by \citet{Muller2013}.

\cite{Pompe2021} give asymptotics for the Cut model. Because $\delta$-SMI is a kind of Cut-model inference (recall, the observation model $p_\delta(Y|\varphi,\tilde\theta)$ is normalised) that theory applies here. Denote by 
\begin{align}\label{eq:pseudo-true_general}
(\varphi^*_\delta,\tilde\theta^*_\delta)&=
\arg\max_{\varphi,\tilde\theta}E_{p^*(y,z)}(p(z|\varphi)p_\delta(y|\varphi,\tilde\theta))\\
\theta^*_\delta&=
\arg\max_{\theta}E_{p^*(y)}(p(y|\varphi^*_\delta,\theta))
\end{align}
the pseudo-true values of $\varphi,\tilde\theta$ and $\theta$ and let
\begin{align}\label{eq:MLE_general}
(\hat\varphi_\delta,\widehat{\tilde\theta}_\delta)&=
\arg\max_{\varphi,\tilde\theta} p(Z|\varphi)p_\delta(Y|\varphi,\tilde\theta))\\
\hat\theta_\delta&=
\arg\max_{\theta}p(Y|\hat\varphi_\delta,\theta)
\end{align}
be the separate MLE's in the imputation and analysis modules. \cite{Pompe2021} show that, under regularity conditions, and taking limits in $m$ and $n$ with $m/n=\alpha$, the cut-MLE's converge as 
\[
\sqrt{n}(\hat\varphi_\delta-\varphi^*_\delta,\widehat{\tilde\theta}_\delta-\tilde\theta^*_\delta,\hat\theta_\delta-\theta^*_\delta)\stackrel{D}{\longrightarrow} N(0,\Sigma_F),
\]
with $\Sigma_F$ a covariance defining asymptotic freqentist coverage of the pseudo-true values. \cite{Pompe2021} give this covariance in terms of the model elements. In contrast, if $\varphi,\tilde\theta,\theta\sim \p^{(k)}_\delta(\varphi,\tilde\theta,\theta|Y,Z)$
are samples from the $\delta$-SMI posterior then
\[
\sqrt{n}(\varphi-\hat\varphi_\delta,\tilde\theta-\widehat{\tilde\theta}_\delta,\theta-\hat\theta_\delta)\stackrel{D}{\longrightarrow} N(0,\Sigma_C),
\]
for some covariance $\Sigma_C$. \cite{Pompe2021} give $\Sigma_C$ in terms of the Cut-model elements. They show that $\Sigma_C\ne \Sigma_F$ in general, and so under the stated regularity conditions, the Cut-model posterior concentrates on the pseudo-true values, but does not have correct Frequentist coverage in the limit of large data. Since $\delta$-SMI is a kind of generalised Cut model (strictly a Cut at each $\delta$) the same observations apply.

\subsection{Choosing the influence parameter} \label{sec:opt_eta}

Having shown how to construct valid candidate posterior distributions for the Cut model and SMI, we select a candidate for downstream inference using an ``external" criterion. In this paper we select a candidate posterior by matching the posterior predictive distribution to the true generative distribution of the data. \cite{Wu2021calib} take a similar criterion when they select a power in the power posterior.

Following \cite{carmona20}, we consider \emph{out-of-sample predictive accuracy} of the model as our utility function for meta-parameter selection. Our criterion is the \textit{Expected Log Pointwise Predictive Density} (ELPD),
\begin{equation}\label{eq:ELPD}
  ELPD_{y,z}(Y,Z;\delta) = \int p^*(y, z) \log \p^{(k)}_{y,z,\delta}(y, z \mid Y, Z) dy dz,
\end{equation}
where $p^*$ is the distribution representing the true data-generating process and
\begin{equation}\label{eq:ks-smi-predictive}
  \p^{(k)}_{y,z,\delta}(y, z \mid Y, Z)=\int p(y, z \mid \varphi, \theta) \p^{(k)}_{\delta}(\varphi, \tilde\theta, \theta \mid Y,Z)\, d\varphi\,d\tilde\theta\,d\theta
\end{equation}
is a candidate posterior predictive distribution, indexed by $\delta$. We would like to set
\[
\delta^*=\arg\max_{\delta\ge 0} ELPD_{y,z}(Y,Z;\delta)
\]
and select the $\delta$-SMI posterior $p^{(k)}_{\delta^*}$ for analysis.
In general the ELPD must be estimated as $p^*$ is unknown.
In Section~\ref{sec:biased_data} (a simple synthetic example) we calculate the ELPD exactly. In Section~\ref{sec:another-example} we use LOOCV to estimate the ELPD (for the $Z$-data alone). In Section~\ref{sec:hpv_analysis} we use the WAIC to estimate the ELPD for the $Y$ and $Z$ data separately using the methods of \cite{Vehtari2016}.  

There is some freedom in the choice of utility function depending on the inference objective.
For example, in Section~\ref{sec:another-example} we use the ELPD for the $Z$ data alone as it prioritises $\varphi$-inference. One weakness of the ELPD is that we often value parameter estimation over predictive performance. It is not clear to us how to answer this issue in general. However, there are settings where one can take a utility which more directly targets parameter estimation. For example, if $\theta=(\theta_1,...,\theta_p)$ are model parameters which enter as a priori exchangeable auxiliary variables naturally interpreted as missing data, and the data $Y,Z$ comes with actual observations of a subset $(\theta_1,...,\theta_d),\ 1\le d<p$ of the $\varphi$-values, then we may choose $\delta$ using LOOCV, treating the observed $\theta$-values as the held-out data. See \cite{carmona22} for an example where this approach is used.

\section{Examples} \label{sec:data_analyses}

Here we present three reproducible examples. R code \citep{Rcore19} reproducing all results below is given in \verb|https://github.com/gknicholls/delta-SMI-repository|.


\subsection{Simulation study: Biased data} \label{sec:biased_data}

This is a simple synthetic example taken from \cite{Liu2009} in which the source of the ``misspecification'' is a poorly chosen prior. Since there is no misspecification in the observation models the interpolating models $\p^{(k)}_\delta,\ \delta\ge 0$ (including Cut and Bayes) concentrate, in the limit $n\rightarrow\infty$ with $m/n$ constant, on the true parameter values $(\varphi^*,\theta^*)$. The KL-divergence between $p^*$ and $\p^{(k)}_{y,z,\delta}$ tends to zero and the ELPD converges to a constant $\int p^*\log(p^*)dydz$ independent of $\delta$. 

Suppose we have two datasets informing an unknown parameter $\varphi$. The first is a ``reliable'' small sample $Z=(Z_1,\ldots,Z_n),\ Z_j\sim N( \varphi, \sigma_z^2 )$, iid for $j=1,...,m$ distribution, with $\sigma_z$ known; the second is a larger sample $Y=(Y_1,\ldots,Y_n), Y_i\sim N( \varphi + \theta , \sigma_y^2 )$ iid for $i=1,...,n$, with $\sigma_y$ known. The ``bias'' $\theta$ is unknown.

This model was given in \cite{Jacob2017b} as an example where Cut model approaches improve on Bayesian inference and analysed in \cite{carmona20} as an example of $\eta$-SMI. Here we repeat this analysis for our new SMI variants. In this normal setup the three interpolations $\eta$-SMI, $\delta$-SMI and $\gamma$-SMI are all identical. We may take a fixed value of $\delta$ and recover the $\eta$-SMI and $\gamma$-SMI distributions by setting $\eta=\sigma_y^2/(\sigma_y^2+\delta^2)$ and $\gamma=\frac{\sigma_y^2}{\delta^2+\sigma_y^2}+1$. 

We choose true parameter values in such a way that each dataset offers apparent advantages to estimate $\varphi$. One dataset is unbiased but has a small sample size, $m=25$, whereas the second has an unknown bias but more samples, $n=50$, and smaller variance. Suppose the true generative parameters are $\varphi^*=0$, $\theta^*=1$, and we know $\sigma_z=2$ and $\sigma_y=1$. We assign a constant prior for $\varphi$, while $\theta$ is subjectively assessed to have a $N(0, \sigma_{\theta}^2)$ prior. We are over-optimistic about the size of the bias and set $\sigma_\theta=0.33$. These choices are all the same as previous authors except that those authors took $\sigma_\theta=0.5$. Our choice is a little more ``extreme''. We do this simply to get an example where effects are a bit more visible.

We calculate the $\delta$-SMI posterior for a range of $\delta\in[0,\infty]$. Picking up from the marginal \eqref{eq:kssmi_marginal_phitheta} of interest,
\[
  \p^{(k)}_{\delta}(\varphi,\theta|Y,Z)= \pi^{(k)}_{\delta}(\varphi|Y,Z) \pi(\theta|Y,\varphi)
\]
where the posterior for $\theta$ given $\varphi$ is
\[
  \pi(\theta|Y,\varphi)=N(\theta;\mu_{\theta|Y,\varphi},\sigma^2_{\theta|Y,\varphi}),
\]
with
\[
  \mu_{\theta|Y,\varphi}=\rho(\bar Y-\varphi),\quad
  \sigma_{\theta|Y,\varphi}^2=(1-\rho)\sigma_\theta^2,\quad \rho= \frac{\sigma_\theta^2}{\sigma_\theta^2+\sigma_y^2/n}
\]
The marginal posterior for $\varphi$  is
\begin{align*}
  \pi^{(k)}_{\delta}(\varphi|Y,Z) & =\int \pi^{(k)}_{\delta}(\varphi,\tilde\theta|Y,Z)d\tilde\theta \\
                                 & =N(\varphi;\mu_\delta,\sigma^2_\delta),
\end{align*}
with
\begin{equation}\label{eq:biased_normal_phi_post_pars}
  \mu_\delta=\lambda \bar Z+(1-\lambda)\bar Y,\quad
  \sigma^2_\delta=\lambda \sigma_z^2/m,\quad
  \lambda= \frac{m/\sigma_z^2}{m/\sigma_z^2+n/(\sigma_y^2+\delta^2+n\sigma_\theta^2)}
\end{equation}
With these expressions we have $\p^{(k)}_{\delta}(\varphi,\theta|Y,Z)$ as a product of normal densities. This may be sampled by simulating
\begin{align*}
  \varphi        & \sim N(\varphi;\mu_\delta,\sigma^2_\delta)                         \\
  \theta|\varphi & \sim N(\theta;\mu_{\theta|Y,\varphi},\sigma^2_{\theta|Y,\varphi}).
\end{align*}
at any desired value of $\delta$. We get the Bayes and Cut posteriors by taking the respective limits $\delta\rightarrow 0$ and $\delta\rightarrow \infty$.
A scatter plot of the $\p^{(k)}_{\delta}$ posterior at three values of $\delta=0,\delta^*,\infty$ is given in Fig.~\ref{fig:NormEx-Scatter}.
\begin{figure}[!ht]
  \begin{center}
    \includegraphics[width=0.6\textwidth]{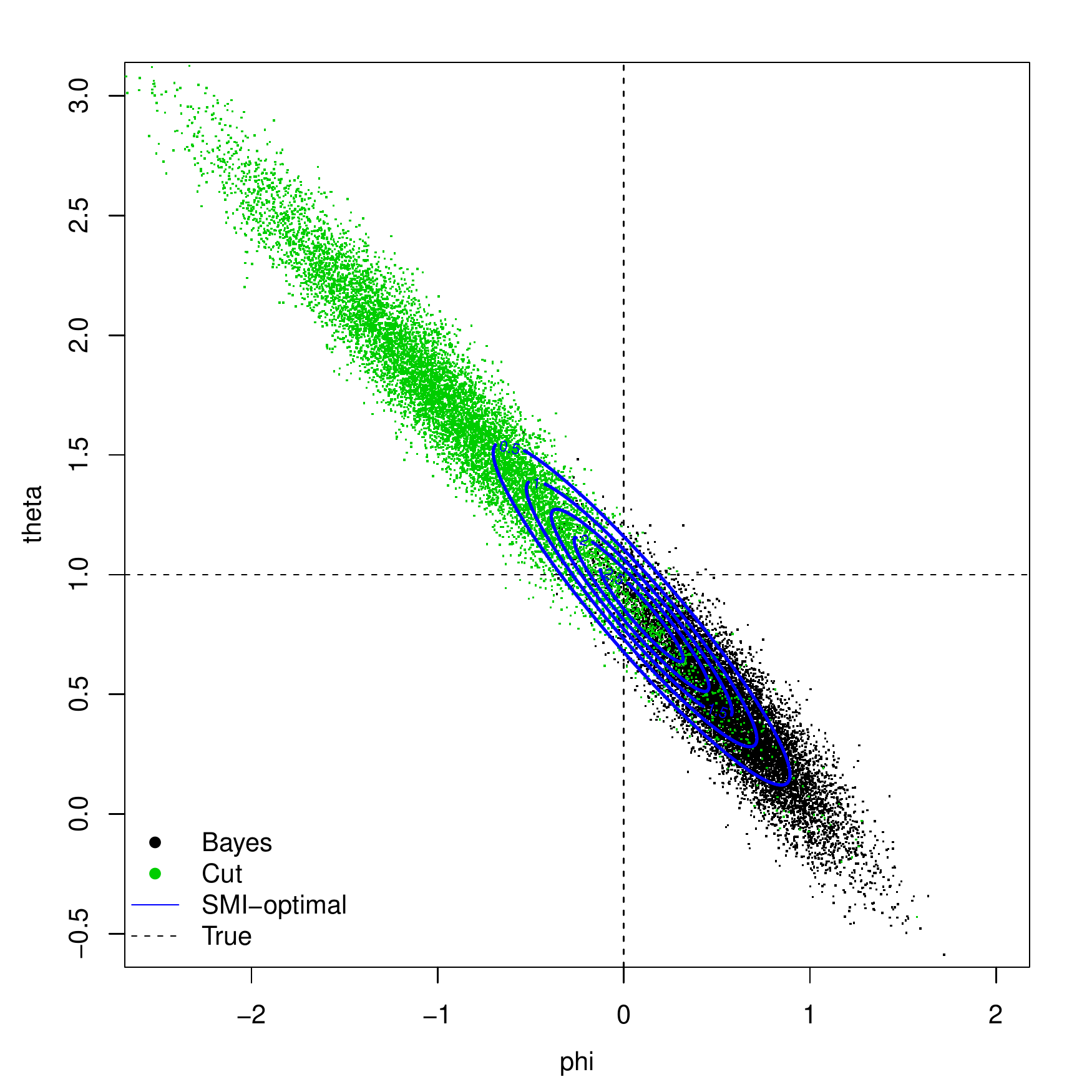}
  \end{center}
  \caption[Biased data]{Candidate posterior distributions for normal biased data example. The points are samples from $\p^{(k)}_{\delta}(\varphi,\theta|Y,Z)$ for three values of $\delta=0,\delta^*$ and $\delta=\infty$ yielding the Bayes, Cut and optimal $\delta$-SMI posterior. The dotted lines show true parameter values. }
  \label{fig:NormEx-Scatter}
\end{figure}
The $\delta$-SMI posterior covers the truth. For ease of visualisation the random number seed was chosen (six attempts) so that the Cut and Bayes posteriors were reasonably well separated, but in other respects this is typical. The $\delta$-SMI posterior does relatively well for recovering the true parameters, though it is chosen by targeting the ELPD. This is not expected, or even desirable, in a misspecified model. However, in this example the observation models are both exactly correct, and the misspecification is in the prior.

For further visualisation we plot in Fig.~\ref{fig:SMI_biased_data} the marginal $\delta$-SMI posteriors for $\varphi$ (top) and $\theta$ (bottom) at $\delta=0$ (Bayes) and $\delta=\infty$ (Cut) together with the selected $\delta$-SMI at $\delta^*$, the choice maximising the ELPD. In this example where only the $\theta$-prior is misspecified,
Bayes has little overlap on the truth. Cut has reasonable overlap but larger variance, as the $Y$ data do not inform $\varphi$. The $\delta$-SMI posterior selected using the ELPD has lower variance than the Cut {\it and} better location.
The data are synthetic, so we estimate the Posterior Mean Squared Errors (PMSE) $E_{\p^{(k)}_{\delta^*}}[(\varphi-\varphi^*)^2\mid Y,Z]$ and $E_{\p^{(k)}_{\delta^*}}[(\theta-\theta^*)^2\mid Y,Z]$ measuring the dispersion of the selected $\delta$-SMI posterior around the truth, and calculate the posterior predictive distribution for new data and the exact ELPD in Appendix~\ref{app:biased_data} using \eqref{eq:ks-smi-predictive} and \eqref{eq:ELPD}.
This simple example would be quite challenging for Monte-Carlo estimation of $ELPD_{y,z}(Y,Z;\delta)$. Referring to Fig.~\ref{fig:SMI_biased_data} the variation in the ELPD (bottom right panel) with $\delta$ is small, so its maximum is hard to locate accurately.

In the right column of Fig.~\ref{fig:SMI_biased_data} we display these metrics for $\delta\in[0,\infty]$. The values of PMSE and $ELPD_{y,z}(Y,Z;\delta)$ for Bayes and Cut correspond respectively to the values taken by the functions plotted at the left and right edges of the graphs. We see their PMSE's are larger (as we would expect from the marginal posterior densities) and their ELPD-values are lower than those of the $\delta$-SMI posterior.
\begin{figure}[!ht]
  \begin{center}
    \includegraphics[width=0.48\textwidth]{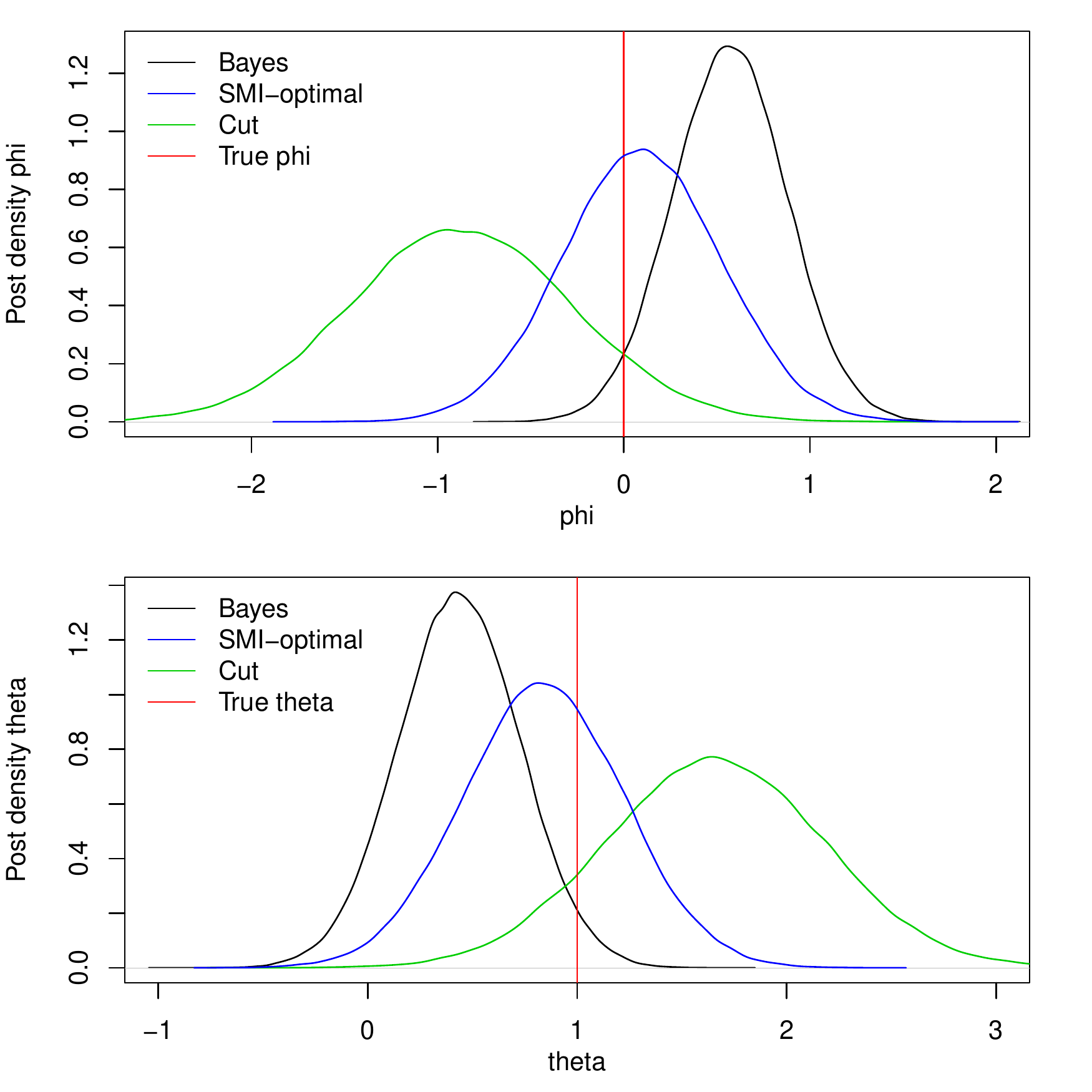}\includegraphics[width=0.48\textwidth]{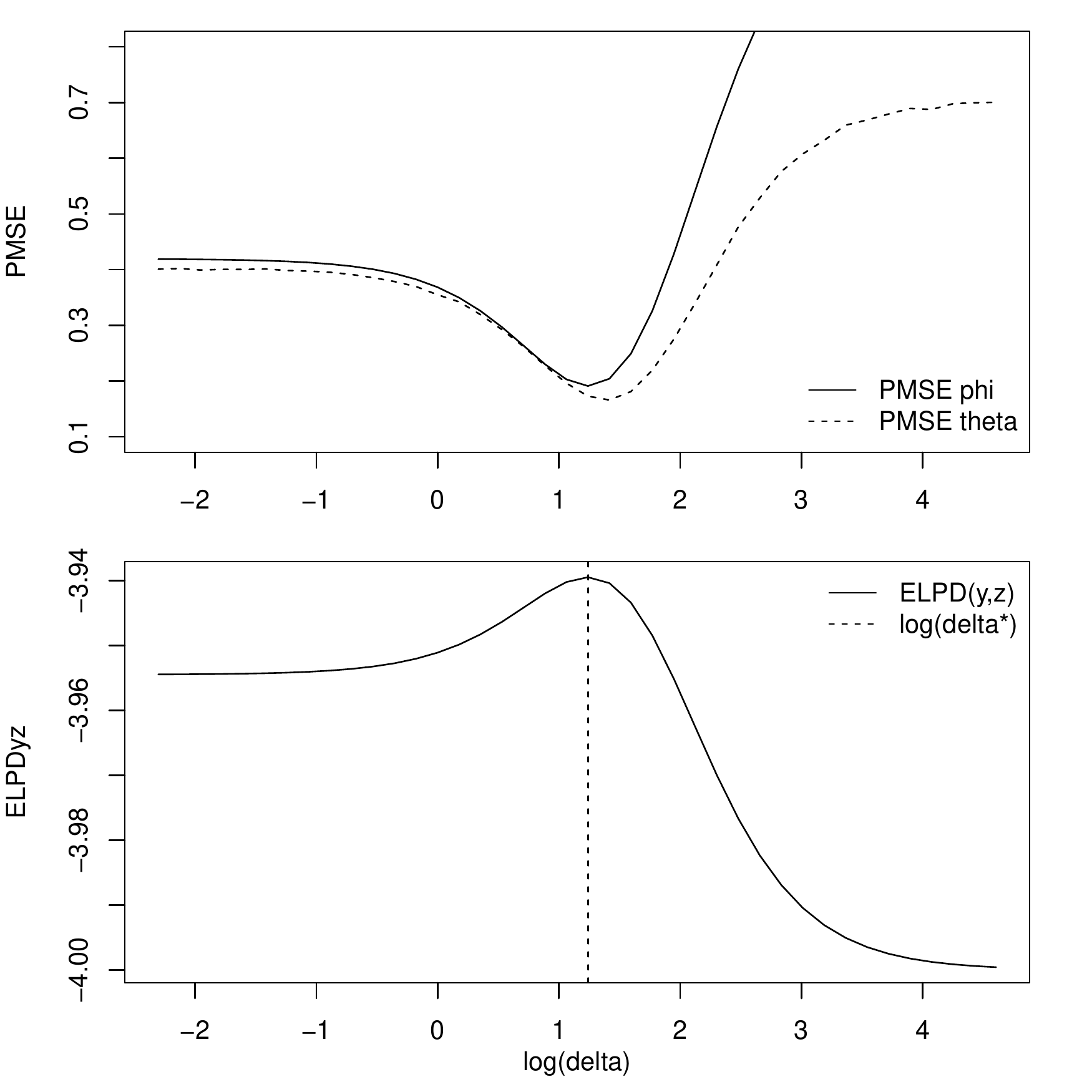}
  \end{center}
  \caption[Biased data]{Model assessment for normal biased data example. Left column: $\delta$-SMI posteriors for $\varphi$ (top) and $\theta$ (bottom) showing the Cut (green) and Bayes (black) and selected $\delta^*$-SMI posterior (blue) with the true parameter values indicated by a vertical line (red). Top-right: PMSE's for $\varphi$ (solid) and $\theta$ (dashed) as a function of the meta-parameter $\log(\delta)$. Bottom-Right: the $ELPD_{y,z}(Y,Z;\delta)$ as a function of $\log(\delta)$. The selected meta-parameter value $\delta^*$ is indicated by the vertical dashed line.}
  \label{fig:SMI_biased_data}
\end{figure}

The estimated value of $\delta^*\simeq 3.5$ in $\delta$-SMI corresponds to $\eta^*\simeq 0.08$ in $\eta$-SMI. The scale of the ``noise'' added to the $Y$-values looks relatively large compared to their variance $\sigma^2_y=1$. This tells us that the $Y$-module is misspecified. However, referring to \eqref{eq:biased_normal_phi_post_pars} we see ${\delta^*}^2\simeq 12$ is large relative to $\sigma_y^2+n\sigma_\theta^2=6.4$, so $\delta$-SMI is actually removing information from the $\theta$-prior from the $\varphi$-imputation. 



\subsection{Misspecified Regression model}\label{sec:another-example}

This simple synthetic example illustrates the behavior of the method when the observation model in the $Y$-module is misspecified. The setup is otherwise similar to the biased-data example. We have a well specified $Z$-module with a small data set. Interest focuses on estimation of $\varphi$. We have a second larger data set (the $Y$module). Standard Bayesian analysis has given us reason to believe the $Y$-module is misspecified so we cannot estimate $\theta$. However, we will use some information from the $Y$-module in order to reduce the variance of our $\varphi$-estimation. We use $\delta$-SMI to control the bias coming from the misspecified $Y$-module.

The model is a regression. Covariates $X_i\sim F_X, i=1,...,n$ and their sampling distribution $F_X$ are known exactly. The fitted models are
\begin{equation}\label{eq:fitted-simple-reg-model}
\begin{aligned}
    Y_i&\sim N(\varphi+\theta X_i,\sigma_y^2),\ i=1,...,n,\\
    Z_j&\sim N(\varphi,\sigma_z^2),\ j=1,...,m.
\end{aligned}
\end{equation}
The true parameter values are $\varphi^*,\theta^*$. The true observation model for $Z$ is the same as the fitted model,
\[
Z_j\sim N(\varphi^*,\sigma_z^2),\ j=1,...,m.
\]
The true model for $Y=(Y_1,...,Y_n)$ is
\begin{equation}\label{eq:true-simple-reg-model}
Y_i\sim N(\varphi^*+\theta^* X_i^k,\sigma_y^2),\ i=1,...,n.
\end{equation}
with $k>0$ a parameter we vary to illustrate different levels of misspecification.
The $\varphi$ and $\theta$ priors are both flat improper priors. Parameter settings are given in Appendix~\ref{app:regression_example}.

The resulting $\delta$-SMI distributions (once integrated over $\tilde\theta$) are
\[
\p_\delta(\varphi|Y,Z,X)=N(\varphi; \tilde\mu_\varphi, \tilde\sigma_\varphi^2),
\]
with 
\[    \tilde\mu_\varphi=\frac{\rho \widebar{z} +\widebar{y}\left(1-\frac{\widebar{x} \widebar{xy}}{\widebar{x^2}\widebar{y}}\right)}{\rho+1-\frac{\widebar{x}^2}{\widebar{x^2}}}\qquad
    \tilde\sigma_\varphi^2=\frac{\rho\sigma_z^2/m}{\rho+1-\frac{\widebar{x}^2}{\widebar{x^2}}}
    \]
where
\[
\rho=\frac{\sigma_y^2+\delta^2}{\sigma_z^2}\times \frac{m}{n}
\]
and
\[
\pi(\theta|Y,X)=N(\theta; \tilde\mu_{\theta|\varphi}, \tilde\sigma_{\theta|\varphi}^2),
\]
with 
\[    \tilde\mu_{\theta|\varphi}=\frac{\widebar{xy}-\varphi\widebar{x}}{\widebar{x^2}},\qquad
    \tilde\sigma_{\theta|\varphi}^2=\frac{\sigma_y^2}{n\widebar{x^2}}.
\]
The joint $\delta$-SMI posterior is then
\begin{equation}\label{eq:regression_example_SMI_post}
\p_\delta(\varphi,\theta|Y,Z,X)\, =\, N(\varphi; \tilde\mu_\varphi, \tilde\sigma_\varphi^2)\ \times\ N(\theta; \tilde\mu_{\theta|\varphi}, \tilde\sigma_{\theta|\varphi}^2)
\end{equation}

The MLE's \eqref{eq:MLE_general} obtained by maximising the likelihoods on each side of the cut coincide with the posterior means above, $\hat\varphi_\delta=\tilde\mu_\varphi$ and $\hat\theta_\delta=\tilde\mu_{\theta|\hat\varphi_\delta}$ (the MLE's $\widehat{\tilde\theta}_\delta=\hat\theta_\delta$ are equal).
These converge to the pseudo-true values defined in \eqref{eq:pseudo-true_general} and given here by
\begin{equation}\label{A23:1}
\theta^*_\delta = \frac{\theta^* M_{X^{k+1}}+M_X\varphi^*-M_X \varphi^*_\delta}{M_{X^2}}
\end{equation}
and 
\begin{equation}\label{A23:2}
\varphi^*_\delta= \varphi^*+\theta^* \frac{  M_{X^2}M_{X^k}-M_X M_{X^{k+1}}}{Var(X)+\alpha M_{X^2}(\sigma_y^2+\delta^2)/\sigma_z^2},
\end{equation}
where $M_{X^r}=E(X^r),\ r=1,2,...$  where $X\sim F_X$ is the scalar covariate (an abuse of notation).
Since the posterior means and MLE's coincide, the $\delta$-SMI posterior in \eqref{eq:regression_example_SMI_post} converges as $n\to \infty$ with $\alpha=m/n$ fixed to concentrate on the pseudo-true values. 
It is clear from the pseudo-true expressions that $\sigma_y^2+\delta^2$ balances $\alpha=m/n$, so larger $\delta$ gives smaller effective $Y$-sample size $n$. 
If $\delta^2=c/\alpha-\sigma_y^2$ for fixed $c>0$, then the posterior concentrates on the same pair of $(\varphi,\theta)$-values as $\alpha$ varies. 
As $\delta\to \infty$ (cut model), $\varphi_\delta\to \varphi^*$ approaches the true value as the $Z$-model is not misspecified, and $\theta_\delta\to \theta^*M_{X^{k+1}}/M_{X^2}$.  If $k=1$ there is no model-misspecification, and the pseudo-truth approach to the true values regardless to $\delta$ and $\alpha$ values. 

In this setting, if we are interested in estimating $\varphi$ then we use the ELPD for $z$ alone to define the optimal $\delta^*$-value as it favors a posterior $\p^{(k)}_\delta$ concentrated on $\varphi^*$. It is given by
\begin{equation}\label{eq:ELPD_z}
    ELPD_z(Y,Z;\eta)=\int p^*(z)\log(\p^{(k)}_\delta(z|Y,Z))dz.
\end{equation}
We can calculate the exact $ELPD_z$ in this example. However, in order to show how well the method works in practice, we instead estimate $ELPD_z$ in \eqref{eq:ELPD_z} using the LOOCV estimator
\begin{equation}\label{eq:ELPD_z_LOOCV_est}
    \widehat{ELPD}_z(Y,Z;\delta)= \frac{1}{m}\sum_{j=1}^m\log(\p^{(k)}_\delta(Z_j|Y,Z_{-j})).
\end{equation}
We set $\delta^*=\arg\max_{\delta\ge 0} \widehat{ELPD}_z(Y,Z;\delta)$. We then estimate
the posterior mean square errors (PMSE) $PMSE_\varphi=E_{\p^{(k)}_{\delta^*}}[(\varphi-\varphi^*)^2\mid Y,Z]$
using $S$ posterior samples 
$\varphi^{(s)}\sim \p^{(k)}_{\delta^*}(\varphi|Y,Z),\ s=1,...,S$ so that
\[
\widehat{PMSE}_\varphi=\frac{1}{S}\sum_{s=1}^S(\varphi^{(s)}-\varphi^*)^2.
\]
In Figure~\ref{fig:MSE_regression_example} (top) we show how the posterior mean squared error varies as we increase the level of misspecification by varying $k$ from $k=1$ (no misfit) up to $k=2$ (linear fit to quadratic). Each box shows the scatter of 100 $\widehat{PMSE}_\varphi$-values estimated using 100 independent replicate data sets and associated $\delta^*$-values. At large $k\simeq 2$ the Cut model (green) gives a lower PMSE than Bayes (red). When $k\simeq 1$ the Bayes posterior is more concentrated on the true parameter. The LOOCV-selected $\delta$-SMI posterior $\p^{(k)}_{\delta^*}$ tracks the ``best" of these two as $k$ varies. One question is whether allowing $\delta$ to take values other than $0$ or $\infty$ is actually adding anything. Our EPLD-utility targets prediction so of course $\delta$-SMI does well on this criteria whilst the PMSE-gains are slight. Figure~\ref{fig:MSE_regression_example} (bottom) compares the exact $ELPD_z$ of the selected $\delta$-SMI posterior with Bayes and Cut and shows the clear benefit of $\delta$-SMI. This amounts to a test of the quality of the LOOCV estimation of $ELPD_z$ in \eqref{eq:ELPD_z_LOOCV_est}. 

There may be some advantage in using $\delta^*$ as a summative measure of misspecification. If it is very small, or very large, we use Bayes or Cut respectively. However, for intermediate values of $k$ in Figure~\ref{fig:MSE_regression_example} we see that $\delta$-SMI does slightly better than Bayes or Cut. For this range of $k$, the Bayes and Cut distributions are far apart, but the misspecification is not so bad that we gain by simply cutting feedback altogether. \cite{carmona20} give an example for $\eta$-SMI in which more dramatic gains are seen from using intermediate values.
\begin{figure}
    \centering
    \includegraphics[width=4.5in]{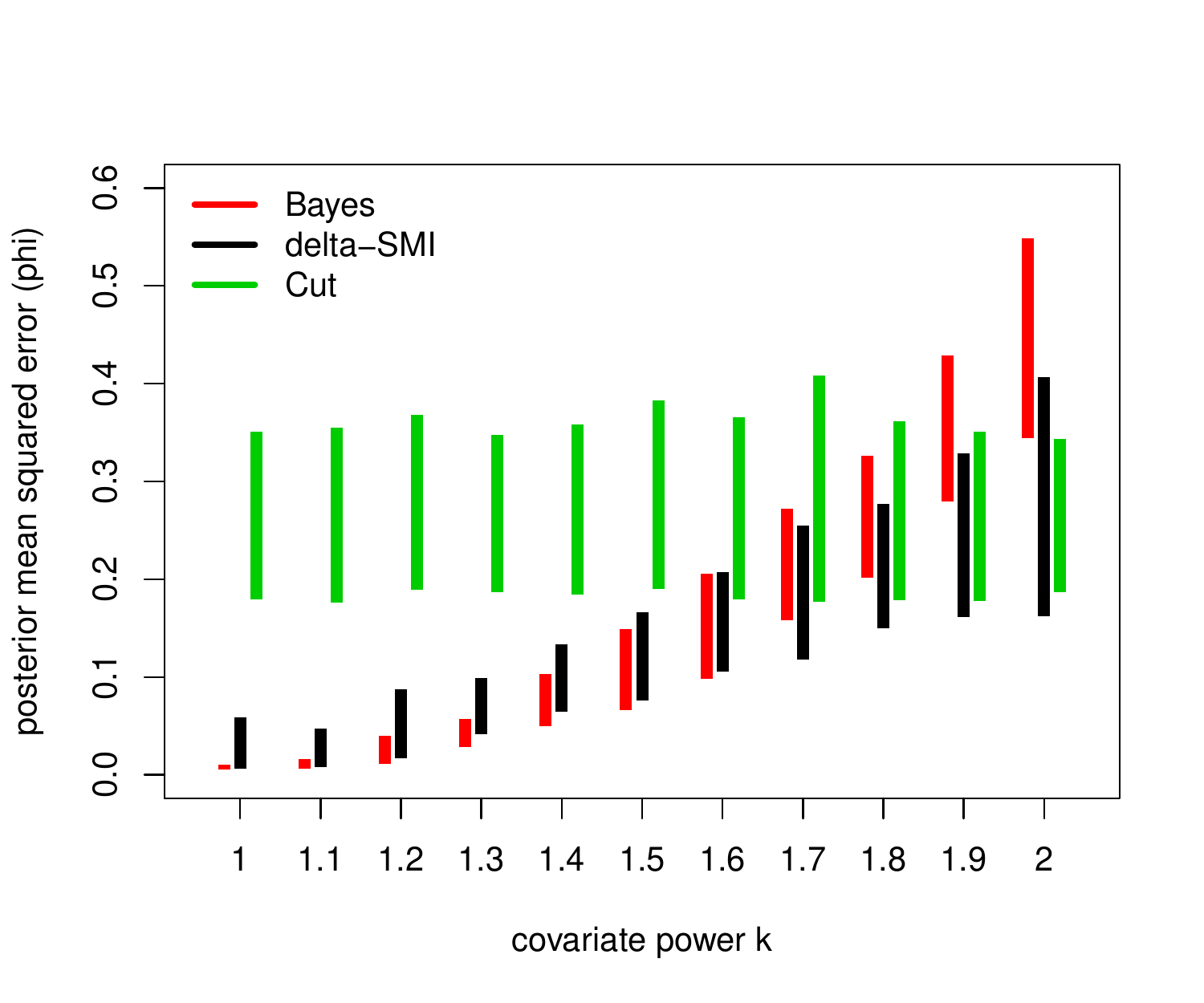}
    \includegraphics[width=4.5in]{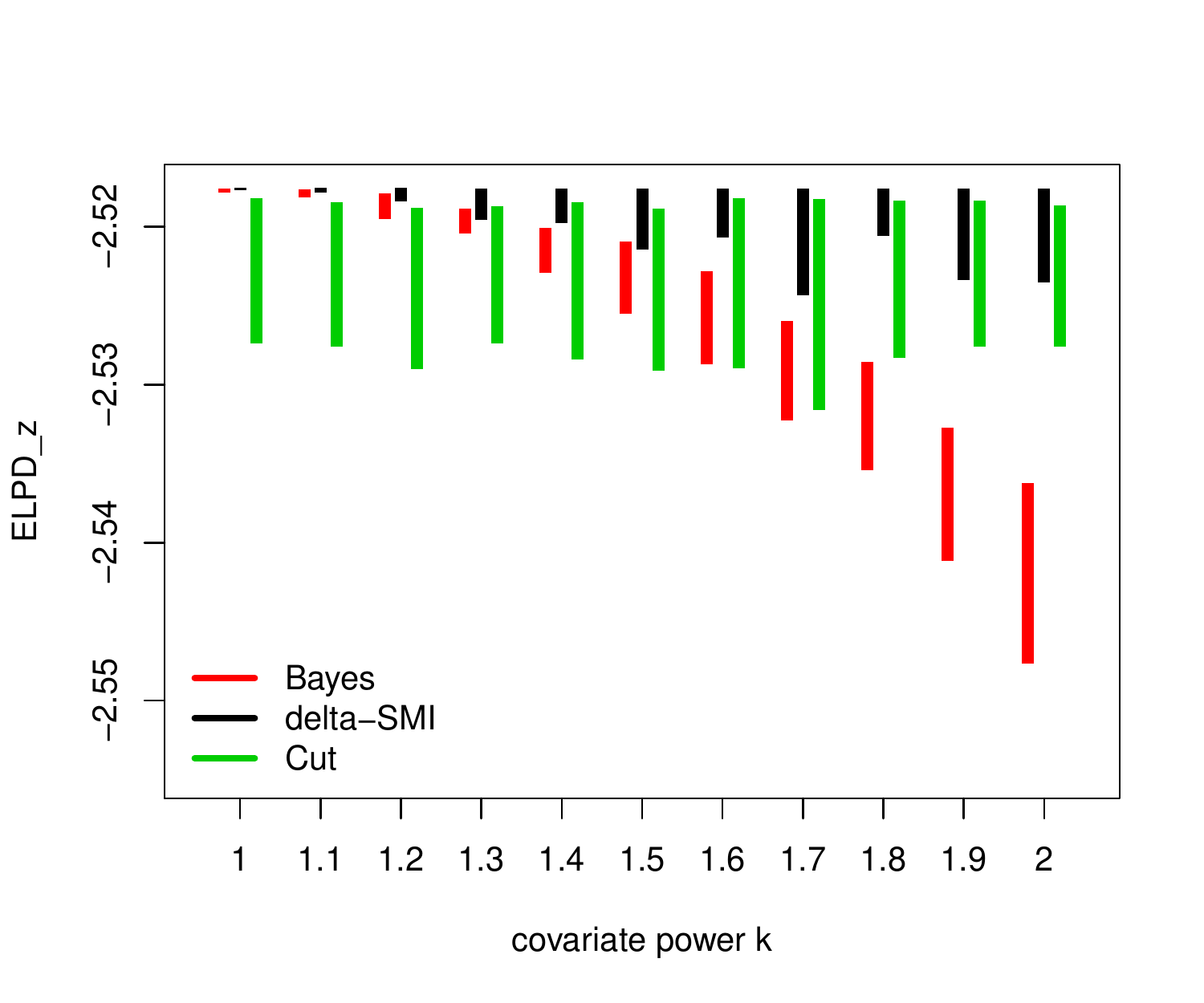}
    \caption{(top) Posterior mean squared error $\widehat{PMSE}_\varphi$ as a function of $k$, the covariate power in the true observation model mean $\varphi^*+\theta^* X^k$. 
    (bottom) Exact $ELPD_z$ of $\delta$-SMI at $\delta^*$ compared to Bayes and Cut.
    These are boxplots with transparent whiskers and outliers.  Each box summarises 100  $\widehat{PMSE}_\varphi$-values computed from 100 independent data sets. Cut model (green), Bayes (red), SMI (black). 
    }
    \label{fig:MSE_regression_example}
\end{figure}

\subsection{Epidemiological data} \label{sec:hpv_analysis}
In our final example, we apply SMI to an epidemiological dataset introduced by \cite{Maucort-Boulch2008}, studying the correlation between human papilloma virus (HPV) prevalence and cervical cancer incidence, revisited by several authors including \cite{Plummer2015} and \cite{Jacob2017b} in the context of Cut models and \cite{carmona20} for $\eta$-SMI.

The model has two modules: in each population $i=1,...,n,\ n=13$, a Poisson response for the number of cancer cases $Y_i$ in $T_i$ women-years of followup, and a Binomial model for the number $Z_i$ of women infected with HPV in a sample of size $N_i$ from the $i$'th population. For $i=1,...,n$,
\begin{align*}\label{eq:HPV_model}
  Y_i &\sim \text{Poisson}( \mu_i ) \\
  \mu_i &= T_i \exp( \theta_1+\theta_2 \varphi_i ) \nonumber \\
  Z_i &\sim \text{Binomial}(N_i, \varphi_i ). \nonumber
\end{align*}
There are reasons to expect the Poisson module to be misspecified \citep{Plummer2015}.
The relaxation of the Poisson likelihood under $\delta$-SMI is defined by 
\begin{equation}\label{eq:deltaPoisLik}
p_\delta( Y_i \mid \varphi, \tilde \theta ) = \sum_{\Y_i=0}^\infty p( \Y_i \mid \varphi, \tilde \theta )K_{\delta}(Y_i,\Y_i).
\end{equation}
The kernel $K_{\delta}$ is a discrete uniform distribution over the $\delta$-neighborhood 
$\b(Y_i,\delta)$ of $Y_i$,
\begin{equation}
	\b(Y_i,\delta) = 
   \crvBr{\y\in\mathbb{Z}_0^+: |\y-Y_i|\le \delta }
   \label{eq:unscaledDeltaNbr}
\end{equation}
so that
\begin{equation}
K_\delta(Y_i,\Y_i) = \frac{\mathbb{I}_{\Y_i\in \b(Y_i,\delta)}}{|\b(Y_i,\delta)|},\ i=1,...,n. \nonumber
\end{equation}
Let $\b(Y_i, \delta)_+=\max(\b(Y_i, \delta)$ (equal $\lfloor Y_i+\delta\rfloor$ here) and $\b(Y_i, \delta)_-=\min(\b(Y_i, \delta)$ (equal $\max(0,\lceil Y_i-\delta\rceil)$ here). Equation \eqref{eq:deltaPoisLik} becomes, for $i=1,...,n$,
\begin{equation}
p_\delta( Y_i \mid \varphi, \tilde \theta ) = F( \b(Y_i, \delta)_+ \mid \varphi_i, \tilde \theta ) - F( \b(Y_i, \delta)_- -1 \mid \varphi_i, \tilde \theta ),
\nonumber
\end{equation}
where $F(\cdot\mid \varphi_i, \tilde \theta)$ is the Poisson CDF with mean $\mu_i$. Notice that when $\delta<1$ the set $\b(Y_i,\delta)=\{Y_i\}$ contains only the observed data so $p_\delta( Y_i \mid \varphi, \tilde \theta )=p( Y_i \mid \varphi, \tilde \theta )$ for that range of $\delta$-values, as observed below Proposition~\ref{prop:ks-smi-interpolates}.

Following \cite{carmona20}, we use the $ELPD_y$ of the Poisson data (where $ELPD_y$ is defined in a similar way to $ELPD_z$ in \eqref{eq:ELPD_z}) as
estimated by WAIC \cite{Vehtari2016} to select the $\delta$-SMI distribution $\p^{(k)}_{\delta^*}$ with posterior predictive distribution most closely matching the true generative model, and compare against $\eta$-SMI, with $\eta^*$ chosen in the same way. Nested MCMC targeting the the $\delta$-SMI posterior was implemented using STAN \citep{stan2017}.

Fig.~\ref{fig:hpvCompareThetaDistr} 
\begin{figure}
\centering
\includegraphics[width=12.5cm]{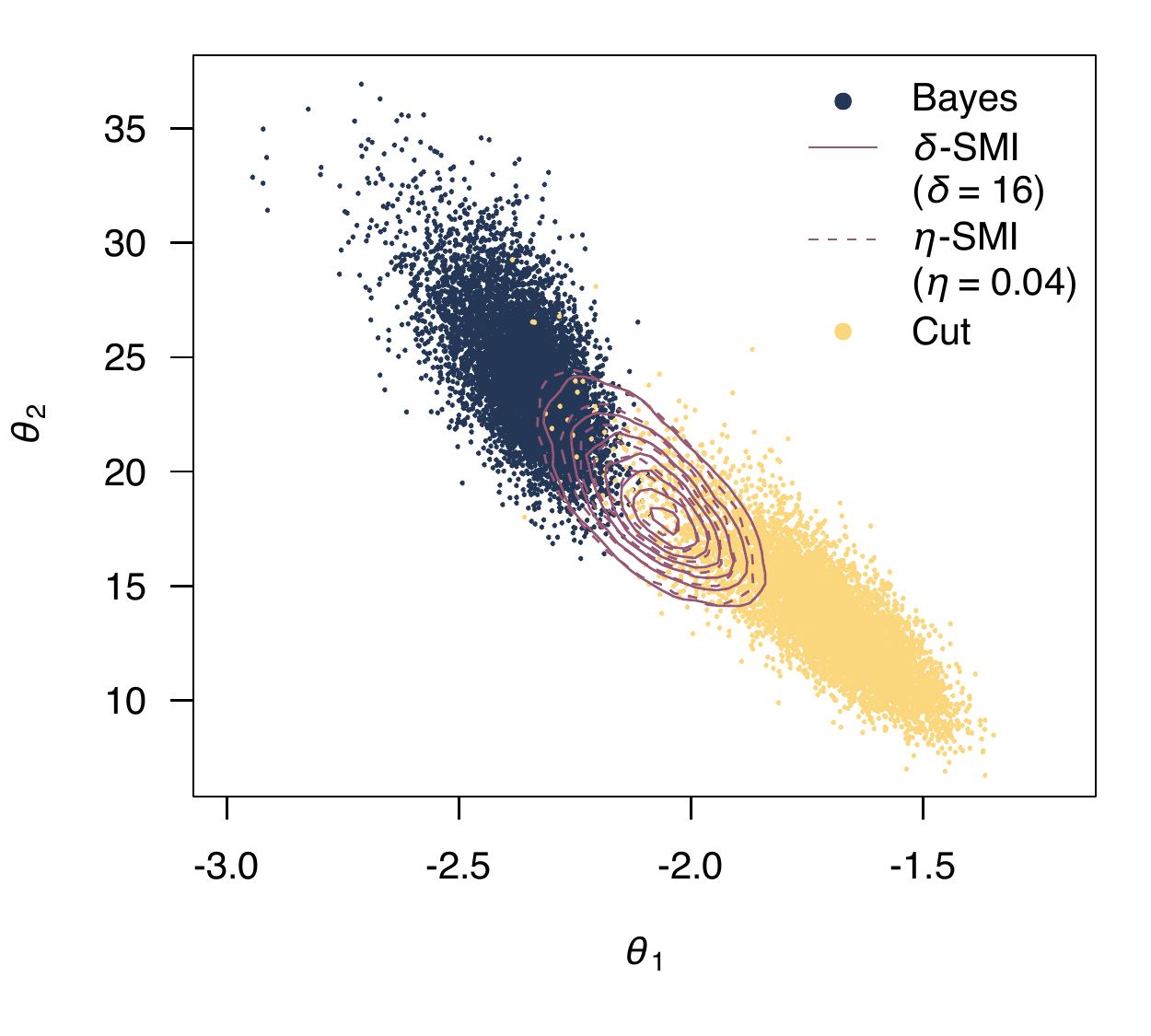} 
\caption{Estimates of the joint distribution of $\theta_1$ and $\theta_2$ in $\delta$- and $\eta$-SMI. 
The Bayes posterior is represented by navy blue dots, the Cut model by yellow dots, $\delta$-SMI by solid purple contour lines, and $\eta$-SMI by dashed purple contour lines. The $\delta$ and $\eta$ values were selected for illustration only. Their SMI distributions are visually distinct from the Bayesian and the Cut-model posteriors and they give comparable values of ELPD (estimated by WAIC). }
\label{fig:hpvCompareThetaDistr}
\end{figure}
presents the joint distribution of $\theta_1$ and $\theta_2$ estimated from the full Bayes model, Cut model, $\delta$-SMI and $\eta$-SMI. The Bayes (navy blue) and Cut-model (yellow) posteriors are well separated. 
The two candidate $\delta$-SMI and $\eta$-SMI distributions (purple and dashed-purple contours) in this figure are not those at $\delta^*$ and $\eta^*$ respectively. Instead, we choose a ``central" $\delta$-value and then choose a corresponding $\eta$ with a comparable $ELPD_y$-value. This is done to show how similar the $\delta$- and $\eta$-SMI posteriors for $\theta_1$ and $\theta_2$ are across the range of candidate posteriors, when we match them by their ELPD values. The $\delta$- and $\eta$-SMI posteriors are of course identical at Bayes and Cut and this shows how similar they are over the range.

Figure~\ref{fig:hpvWaicEtaVsDelta} 
\begin{figure}
\centering
\includegraphics[width=12.5cm]{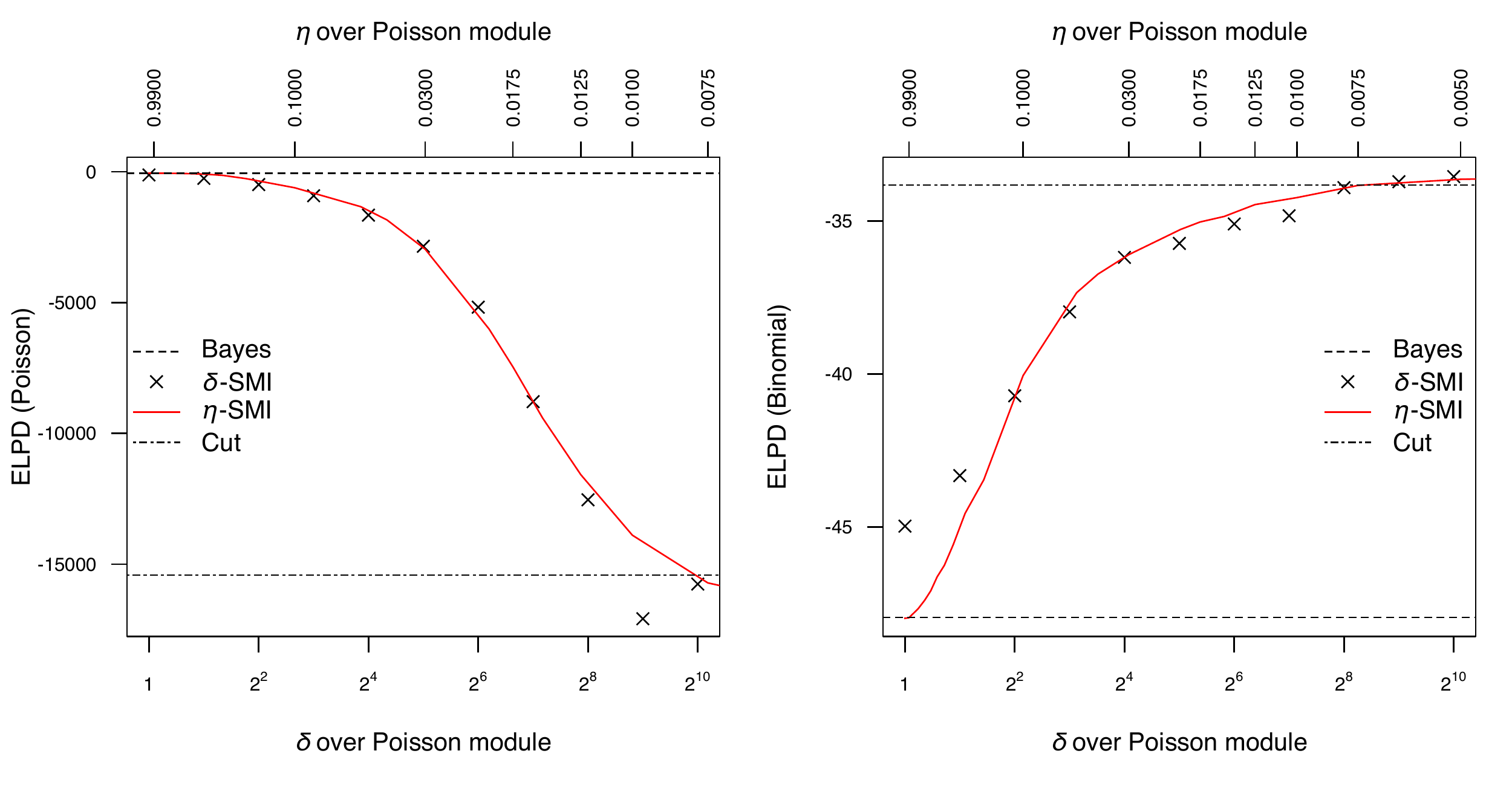} 
\caption{Estimated ELPD values (using WAIC) as predictive criteria for selection of $\eta \in \sqrBr{0, 1}$ and $\delta \in \crvBr{0, 1, ...}$ for the HPV model.
The $\eta$ values are transformed by non-linear but monotone regression to match the $\delta$-SMI ELPD-values.
The left panel shows $ELPD_y$ and is computed on the Poisson $Y$-data alone. The right panel shows $ELPD_z$ and is computed on the Binomial $Z$ data.}
\label{fig:hpvWaicEtaVsDelta}
\end{figure}
presents the ELPD values of the candidate $\delta$-SMI (crosses) and $\eta$-SMI distributions (red curve). For each $\delta$ there is an $\eta$ giving the same ELPD. We find a monotone decreasing function transforming the $\eta$ values. The function is chosen so that the ELPD trend across $\eta$ matches that across $\delta$ as closely as possible.

In Figure~\ref{fig:hpvWaicEtaVsDelta} (left), the Bayes posterior gives better posterior predictive performance for the Poisson data, the $Y$'s (largest $ELPD_y$ at small $\delta$) so we choose Bayes when we choose $\delta$ to maximise $ELPD_y$ in the graph on the left.
In this case the $Y$-model is misspecified so the well specified Binomial model ```helps" for $Y$-prediction. In contrast, if we care about predicting the Binomial data $Z$, so we select a $\delta$-SMI posterior using $ELPD_z$, then we see from Figure~\ref{fig:hpvWaicEtaVsDelta} (right) that the Cut model is favored: the $Z$ model is well-specified, so information from the misspecified $Y$-model only worsens performance.

The $\delta$ meta-parameter in $\delta$-SMI seems more readily interpretable than the $\eta$ meta-parameter in $\eta$-SMI. Suppose we use the ELPD and select $\eta^*=0.1$. This seems rather far from Bayes at $\eta=1$. However, based on the $ELPD_y$ values of the Poisson module shown in Fig.~\ref{fig:hpvWaicEtaVsDelta}, $\eta^*=0.1$ gives a similar $ELPD_y$ to $\delta$-SMI with $\delta = 8$. Now typical values of the Poisson data $Y$ are in the hundreds (the median is 162) so ``coarsening" these data with a kernel of bandwidth $\delta=8$ should lead to a mild modification of the posterior.

Many kernels would satisfy Proposition~\ref{prop:ks-smi-interpolates}. 
We investigated sensitivity to the choice of kernel, considering in particular kernels in which the ``bandwidth" $\delta$ was larger at larger $Y$-values (we used the top-hat kernel centred at $y$ with width $\sqrt{y}\delta$).
The results (which we do not report) were robust to this variation at least.



\section{Discussion} \label{sec:discussion}

In this paper we have extended the property of valid belief updates to prequentially additive losses. We gave some examples of prequentially additive losses arising in Cut models and three forms of SMI. These order-coherent inference schemes treat misspecification in models with multiple modules. One criticism of this program is that order-coherence is not axiomatic for misspecified models. However, it seems to us a desirable property if the fitted model imposes conditional independence. 

Another criticism we note in Section~\ref{sec:asymptotics} is that Cut models and $\delta$-SMI do not have correct Frequentist coverage of the pseudo-true parameters in the limit of many observations \citep{Pompe2021}. However, first, we expect SMI to be useful when one module is well specified and we wish to bring in information from other potentially misspecified modules. In our running example, Figure~\ref{fig:toy_multimodular_model}, the Frequentist coverage of the asymptotic Cut-model posterior for $\varphi$ under replication of the $Z$-data will be correct as that module is by assumption well-specified. Secondly, in our experiments in Section~\ref{sec:another-example} on a small data set the distribution of PMSE values obtained for SMI under replication of the data was not worse than Cut and Bayes and often better. This behavior is observed over a range of different levels of misspecification using fitting methods that are available in realistic settings. Finally, the cut-alternative suggested in \cite{Pompe2021}, which does have correct asymptotic Frequentist coverage, is not an order coherent belief update. 

A broader criticism is that the parameters of a strongly misspecified model loose the physical meaning they get from the generative model. Whilst prediction of new data still makes sense, parameter estimation does not. Again, this criticism does not arise when our aim is to control the flow of information from a misspecified model into a well-specified model and estimate parameters in the well specified model.

We used the ELPD as a utility to select a belief update. In general the utility should take into account the objectives of analysis. The ELPD targets predictive performance. When our interest is in parameter estimation and not prediction, we use the ELPD as a proxy for a utility targeting the parameters. We can choose the data on which the ELPD is computed so that the ELPD is sensitive to the parameters we care about. For example, if our aim is to infer $\varphi$ in Figure~\ref{fig:toy_multimodular_model} then $ELPD_z$ in \eqref{eq:ELPD_z} is a natural choice.

The model components identified as ``modules" may to some extent be chosen in the analysis. A module may contain more than one distinct data type, or none. Modules with no data incorporate prior information, and this information may need to be cut, or modulated
in the same way as any other source of information entering the analysis. 
\cite{Styring2017} and \cite{Yu2021variationalcut} give Cut-model analyses, and \cite{carmona20} and \cite{styring22} give $\eta$-SMI analyses of a hierarchical model for archaeological data in which one of the modules has a large vector of missing data, but no observed data. 

We have seen that $\delta$-SMI and $\eta$-SMI can give very similar posteriors, identical in the simple normal example in Section~\ref{sec:biased_data}, and in general depending on the chosen $\delta$-SIM smoothing kernel $K_\delta$. We presented SMI as examples of Gibbs posteriors with loss functions which are only prequentially additive but give valid and order-coherent belief updates. The $\eta$-SMI family of posterior distributions are based on power posteriors. This is a natural choice, but not the only one available. Any order-coherent family of distributions interpolating Cut and Bayes is potentially of interest. One good feature of $\delta$-SMI posteriors is that $\delta$ has the same dimension as the data $Y$,
so the measure of misspecification has a simple interpretation. It is large or small compared to the variation in the sampled $Y$-values. Also, the $\delta$-SMI posterior is a kind of ABC-posterior in which we condition on the data in some neigbourhood of the observed data. However, in $\delta$-SMI this neighborhood is a product space of neighborhoods for each observation $Y_i,\ i=1,...,n$, there is no ``summary statistic" and we recover Bayesian inference as $\delta\to 0$.
This kind of connection between ABC-like methods and misspecification has been noted elsewhere \citep{Miller2018a}. 

One other feature of $\delta$-SMI distinct from $\eta$-SMI is that the likelihood relaxation $p_{\delta}(Y|\varphi,\theta)$ is itself a probability distribution normalised over the data. The power-likelihood $p(Y|\varphi,\theta)^\eta$ in $\eta$-SMI is not normalised in this way. It follows that the imputation distribution $\pi^{(k)}_\delta(\varphi,\tilde\theta|Y,Z)$ is given by Bayes rule for the observation model $p_{\delta}(Y|\varphi,\theta)$. However the inference itself is not Bayesian, unless $\delta=0$, as $\p^{(k)}_\delta(\varphi,\tilde\theta,\theta|Y,Z)$ is not given by Bayes rule.

A number of extensions and variations seem possible. \cite{Goudie2019melding} consider multi-modular models which are in conflict because shared parameters have different priors in different models. They use Markov melding to bring these together in a single model with pooled priors. The pooled priors represent a kind of consensus across modules. This could be combined with SMI if some individual generative models are mispecified. 
In dictatorial pooling the pooled prior is taken to be the prior in one ``authoritative" module. 
This may lead to misspecification in modules sharing the parameter. This is a setting suitable for SMI, where we know which modules are misspecified and need to modulate their influence on inference in the authoritative module.

The Cut and Bayes posteriors can be replaced by distributions derived from the Posterior Bootstrap \citep{Pompe2021} or Bagged posteriors \citep{Huggins2021} and this suggests $\delta$-SMI-like sequences of distributions interpolating these ``Cut" and "Bayes'' distributions by adding ``noise" with bandwidth $\delta$ to $Y$. Since these bootstrapped posterior distributions have good asymptotic Frequentist coverage of the pseudo-true parameters, at least for misspecified variance, it is to hoped that the $\delta$-SMI sequence would inherit these properties.

\newpage



\pagebreak




\bibliographystyle{ba}
\bibliography{references}

\begin{supplement}

\appendix
\renewcommand{\thesection}{A\arabic{section}}

\section{Proofs}

\subsection{Proof of Proposition~\protect\ref{prop:cutmodeladditive}}\label{sec:prop:cutmodeladditive:proof}

{\it Proposition}~\ref{prop:cutmodeladditive}.
  The Cut-model loss $l^{(c)}( \varphi,\theta ; Y, Z ,\pi_0)$ in \eqref{eq:cut_loss} is prequentially additive for the Cut-model posterior, that is, Equation~\eqref{eq:intro_preq_additivity} holds for $l=l^{(c)}$, 
  \[
  \q_k(\varphi, \theta)\propto
  \p^{(c)}(\varphi, \theta|Y^{(1:k)},Z^{(1:k)}), \ k=1,...,K
 \]
  and any partition $Y^{(1:K)},Z^{(1:K)}$ of conditionally independent data $(Y,Z)$. 

\begin{proof}
It is sufficient to show that \eqref{eq:intro_preq_additivity} holds for any partition of the data of size $K=2$ (as we can then split down to any partition), so we split each data set $Y,Z$ into two subsets. Since $\q_0=\pi_0$ and $\q_1$ is the Cut posterior given data $(Y^{(1)},Z^{(1)})$, we should show that
\begin{equation}\label{eq:pacml}
    l^{(c)}( \varphi,\theta ; Y, Z ,\pi_0) = l^{(c)}( \varphi,\theta ;Y^{(1)}, Z^{(1)},\pi_0)+l^{(c)}( \varphi,\theta ; Y^{(2)}, Z^{(2)}, \q_1),
  \end{equation}
where $\q_1(\varphi,\theta)=\p^{(c)}(\varphi,\theta|;Y^{(1)},Z^{(1)})$ so from \eqref{eq:cutpost},
\begin{equation}\label{eq:cutpostY1}
  \q_1(\varphi,\theta)=\pi(\varphi|Z^{(1)})\pi(\theta|Y^{(1)},\varphi).
\end{equation}
The loss on the LHS of \eqref{eq:pacml} is given in \eqref{eq:cut_loss}.
The loss in the first term on the RHS of \eqref{eq:pacml} is
\[
  l^{(c)}( \varphi,\theta ;Y^{(1)}, Z^{(1)},\pi_0)=\l^{(b)}( \varphi,\theta ; Y^{(1)}, Z^{(1)})+\log(p(Y^{(1)}|\varphi))
\]
where
\[
  p(Y^{(1)}|\varphi)=\int p(Y^{(1)}|\varphi, \theta)\pi(\theta|\varphi)d\theta.
\]
The loss in the second term on the RHS of \eqref{eq:pacml} is
\[
  l^{(c)}( \varphi,\theta ; Y^{(2)}, Z^{(2)}, \p^{(c)}(\varphi,\theta|;Y^{(1)},Z^{(1)}))=l^{(b)}( \varphi,\theta ; Y^{(2)}, Z^{(2)})+\log(p(Y^{(2)}|Y^{(1)},\varphi))
\]
where
\[
  p(Y^{(2)}|Y^{(1)},\varphi)=\int p(Y^{(2)}|\varphi, \theta)\q_1(\theta|\varphi)d\theta,
\]
since $\q_1(\theta|\varphi)$ is the ``prior'' passed on to the second stage from the belief update in the first stage. From \eqref{eq:cutpostY1}, this is the conditional probability for $\theta|Y^{(1)},\varphi$ in the Cut-model posterior $\p^{(c)}$ so $\q_1(\theta|\varphi)=\pi(\theta|Y^{(1)},\varphi)$.

We are checking the condition holds for iid data so for the Bayes loss,
\[
  \l^{(b)}( \varphi,\theta ; Y, Z)=\l^{(b)}( \varphi,\theta ; Y^{(1)}, Z^{(1)})+\l^{(b)}( \varphi,\theta ; Y^{(2)}, Z^{(2)}).
\]
Assembling the terms,
\begin{align*}
  RHS \protect \eqref{eq:pacml}
   & =
  \l^{(b)}( \varphi,\theta ; Y, Z)+\log p(Y^{(1)}|\varphi) +\log \int p(Y^{(2)}|\varphi, \theta)\pi(\theta|Y^{(1)},\varphi)d\theta                          \\
   & =\l^{(b)}+\log p(Y^{(1)}|\varphi) +\log \int p(Y^{(2)}|\varphi, \theta)\frac{p(Y^{(1)}|\varphi, \theta)\pi(\theta|\varphi)}{p(Y^{(1)}|\varphi)}d\theta \\
   & =\l^{(b)}+\log p(Y|\varphi),
\end{align*}
since $p(Y^{(2)}|\varphi,\theta)p(Y^{(1)}|\varphi, \theta)=p(Y|\varphi, \theta)$ and $p(Y|\varphi)$ is given by \eqref{eq:marginal_Y_phi}. This completes the proof as
\[
  \l^{(b)}+\log p(Y|\varphi)=l^{(c)}( \varphi,\theta ; Y, Z ,\pi_0).
\]
\end{proof}

\subsection{Proof of Theorem~\protect\ref{thm:preq_add_coherent_gives_post}}\label{sec:preq_add_coherent_gives_post:proof}

{\it Theorem~\ref{thm:preq_add_coherent_gives_post}.} If a loss $l$ is prequentially additive with respect to the belief update given by the Gibbs posterior, 
  \[
  \psi^{(q)}(l(\varphi, \theta;Y,Z,\pi_0),\pi_0)\propto \exp(-l(\varphi, \theta;Y,Z,\pi_0)) \pi_0(\varphi, \theta)
  \]
  then $\psi^{(q)}$ is order-coherent. It further holds that $L(\nu; Y,Z,\pi_0)$ in \eqref{eq:big_post_loss}
  is the only valid loss for an order-coherent belief update and $\psi^{(q)}$ itself is the optimal valid order-coherent belief update $\psi$ in \eqref{eq:valid_belief_update}.

\begin{proof}
The candidate $\psi^{(q)}$ is order-coherent due to the exponential form. We have from Definition~\ref{def:preq-order-coherent} and \eqref{eq:psi-q-in-valid-thm} that
\begin{align*}
    \psi^{(q)}\{ l(\varphi, \theta;Y^{(2)},Z^{(2)},\q_1), \q_1 \}&\propto \exp(-l(\varphi, \theta;Y^{(2)},Z^{(2)},\q_1)) \q_1(\varphi, \theta),\\
    \intertext{then expanding the $\q_1$ ``prior'' using \eqref{eq:p1-in-def-order-co} and \eqref{eq:psi-q-in-valid-thm} again,} 
    &\propto \exp(-l(\varphi, \theta;Y^{(2)},Z^{(2)},\q_1)) \exp(-l(\varphi, \theta;Y^{(1)},Z^{(1)},\pi_0)) \pi_0(\varphi, \theta),\\
    \intertext{but $l$ is prequentially additive with respect to $\psi^{(q)}$ so,}
    &\propto \exp(-l(\varphi, \theta;Y,Z,\pi_0))    \pi_0(\varphi, \theta),
\end{align*}
which we recognise as $\psi\{ -l(\varphi, \theta;Y,Z,\pi_0), \pi_0 \}$. This verifies that \eqref{eq:preq-order-coherent} holds.

We now show that $L(\nu; Y,Z,\pi_0)$ in \eqref{eq:big_post_loss} is the only valid loss in the sense of \cite{Bissiri2016}. Our proof  shows that we can substitute prequential additivity for additivity in the Theorem in the supplement to \cite{Bissiri2016} which establishes KL as the unique prior to posterior loss in \eqref{eq:big_post_loss}, so the following is very similar. 

Let $\xi=(\varphi,\theta)$ and $O=(Y,Z)$ so the belief update is from $\pi_0(d\xi)$ to $\nu(d\xi)$ under the loss $l(\xi;O,\pi_0)$.
Denote by $\xi \in \Omega$ the parameter space of $(\varphi,\theta)$. We assume the total loss must be the sum of the expected loss and a prior to posterior divergence $D_g$, that is, 
\[
L(\nu;O,\pi_0)=\mathbb{E}_\nu[l(\xi;O,\pi_0)]+D_g(\nu,\pi_0).
\] 
\cite{Bissiri2016} justify this form which we take as given. They establish the valid belief update for the class of $g$-divergences,
\[
D_g(\nu,\pi_0)=\int g\!\left(\frac{d\nu}{d\pi_0}\right) \pi_0(d\xi)
\] 
with $g$ a fixed differentiable and convex function from $(0,\infty)$ to $\R$ satisfying $g(1) = 0$. Under these conditions they give a concise proof that $D_g$ must be the KL divergence (in fact, $g(x)=k x\log x + (g'(1)-k)(x-1)$ for some $k>0$ - the extra terms integrate to zero). The authors cite \cite{bissiriwalker12a} for a proof under weaker conditions. They show that over this class of $g$-divergences, the KL divergence is necessary and sufficient for the optimal belief update $\psi$ to be order-coherent for every parameter space $\Omega$ and every loss such that the objects involved exist.

First of all is clear that KL is sufficient for order-coherence as the optimal valid belief update is then equal to the Gibbs posterior $\psi^{(q)}$ (see below) and we have seen this is order-coherent under the conditions of Theorem~\ref{thm:preq_add_coherent_gives_post}. In order to show KL is necessary it is enough to give an example where the KL divergence is the only $g$-divergence giving coherence, so \cite{Bissiri2016} take a parameter space with just two states, $\Omega=\{\xi_1,\xi_2\}$ say. Let $O^{(1)}=(Y^{(1)},Z^{(1)})$, $O^{(2)}=(Y^{(2)},Z^{(2)})$ and  \[\q_1(\xi)\propto \exp(-l(\xi;O^{(1)},\pi_0))\pi_0(\xi)\]
in Definitions~\ref{def:preq_add_first} and \ref{def:preq-order-coherent}, using the belief update $\psi^{(q)}$ which makes $l(\xi;O,\q)$ prequentially additive. Now take $I_1=(O^{(1)},\pi_0)$, $I_2=(O^{(2)},\q_1)$ and $I=(O,\pi_0)$ in the proof page 3 of the supplement to \cite{Bissiri2016}. We go through this to make it clear that everthing continues to fall into place and the presence of $\q_1$ inside the information $I_2$ is just what we need to make things work. We should keep in mind below that that $\pi_0$ and $\q_1$ are fixed pieces of information inside $I_1$ and $I_2$ as $p$ is varied.

By prequential additivity, 
\begin{equation}\label{eq:pre_add_in_bissiri_proof}
    l(\xi;I)=l(\xi;I_1)+l(\xi;I_2).
\end{equation}
As there are just two states the distribution $\nu$ is a probability mass function of the form $p\I_{\xi=\xi_1}+(1-p)\I_{\xi=\xi_2}$ parameterised by $p$ so we substitute $p$ for $\nu$ and write $L(p;I,\pi_0)$. The prior is
\[
\pi_0(\xi)=p_0\I_{\xi=\xi_1}+(1-p_0)\I_{\xi=\xi_2},
\]
for some $p_0\in [0,1]$. Let 
\begin{equation}\label{eq:p1_def_bissiri_proof}
    p_1=\arg\min_{p\in [0,1]} L(p;I_1,\pi_0)
\end{equation}
so that the belief update from the prior, with the first block of data, is
\[
\p_1(\xi)=p_1\I_{\xi=\xi_1}+(1-p_1)\I_{\xi=\xi_2}.
\]
The overall belief update $\p_2$ is
given by the belief update from the prior with all of the data, so we set
\begin{equation}\label{eq:p2_def_bissiri_proof}
    p_2=\arg\min_{p\in [0,1]} L(p;I,\pi_0).
\end{equation} 
and define
\[
\p_2(\xi)=p_2\I_{\xi=\xi_1}+(1-p_2)\I_{\xi=\xi_2}.
\]
The requirement that the belief update be order-coherent imposes
\begin{equation}\label{eq:order_coherence_in_bissiri_proof}
\arg\min_{p\in [0,1]} L(p;I_2,\p_1)=p_2.
\end{equation}
With these identifications for $I_1, I_2$ and $I$ and substituting $p_0, p_1$ and $p_2$ for $\pi_0, \p_1$ and $\p_2$ in the notation, the losses can be written
\begin{align}
    L(p,I_1,p_0)&=p \;  l(\xi_1;I_1) + (1-p) \;  l(\xi_2;I_1) +
    p\;  g\!\left(\frac{p}{p_0}\right) + (1-p)\;  g\!\left(\frac{1-p}{1-p_0}\right),\\
    L(p,I_2,p_1)&=p \;  l(\xi_1;I_2) + (1-p) \;  l(\xi_2;I_2) +
    p\;  g\!\left(\frac{p}{p_1}\right) + (1-p)\;  g\!\left(\frac{1-p}{1-p_1}\right),\\
    L(p,I,p_0)&=p \;  l(\xi_1;I) + (1-p) \;  l(\xi_2;I) +
    p\;  g\!\left(\frac{p}{p_0}\right) + (1-p)\;  g\!\left(\frac{1-p}{1-p_0}\right).
\end{align}
Differentiating with respect to $p$ in order to solve Equations~\eqref{eq:p1_def_bissiri_proof}, \eqref{eq:p2_def_bissiri_proof} and \eqref{eq:order_coherence_in_bissiri_proof} respectively gives
\begin{align}
     l(\xi_1;I_1) - l(\xi_2;I_1) &=
    g'\!\left(\frac{p_1}{p_0}\right) - g'\!\left(\frac{1-p_1}{1-p_0}\right),\label{eq:g-dash1}\\
    l(\xi_1;I) - l(\xi_2;I) &=
    g'\!\left(\frac{p_2}{p_0}\right) - g'\!\left(\frac{1-p_2}{1-p_0}\right),\label{eq:g-dash-all}\\
    l(\xi_1;I_2) - l(\xi_2;I_2) &=
    g'\!\left(\frac{p_2}{p_1}\right) - g'\!\left(\frac{1-p_2}{1-p_1}\right).\label{eq:g-dash2}
\end{align}
Now by prequential additivity in \eqref{eq:pre_add_in_bissiri_proof} the sum of
the LHS of \eqref{eq:g-dash1} and the LHS of \eqref{eq:g-dash2} is
\[
l(\xi_1;I_1) - l(\xi_2;I_1)+l(\xi_1;I_2) - l(\xi_2;I_2)=l(\xi_1;I) - l(\xi_2;I)
\]
which is the LHS of \eqref{eq:g-dash-all} so
\begin{equation}
   g'\!\left(\frac{p_1}{p_0}\right) - g\!\left(\frac{1-p_1}{1-p_0}\right)+ g'\!\left(\frac{p_2}{p_1}\right) - g'\!\left(\frac{1-p_2}{1-p_1}\right) = g'\!\left(\frac{p_2}{p_0}\right) - g'\!\left(\frac{1-p_2}{1-p_0}\right).
\end{equation}
This is Equation~7 in
the proof on pages 3 and 4 of the supplement to \cite{Bissiri2016}. From this point the proof goes through without change: the assumed and derived properties of $g$ require $g(x)=k x\log x + (g'(1)-k)(x-1)$ and the second term doesn't contribute to $D_g$ as it integrates to zero. 


This establishes $L$ in \eqref{eq:big_post_loss} as a valid loss for an order-coherent belief update. However, $\pi_0$ is fixed in the variation over $\nu$ in \eqref{eq:valid_belief_update}, so \eqref{eq:psi-q-in-valid-thm} is the density of the measure $\nu$ maximising \eqref{eq:big_post_loss} for both additive and prequentially additive losses. The proof of this step is unchanged from that given at the end of Section 1.1 of \cite{Bissiri2016}.
\end{proof}

\subsection{Proof of Proposition~\protect\ref{prop:gam-eta-del_are_preq_add}}\label{sec:prop:gam-eta-del_are_preq_add:proof}

{\it Proposition}~\ref{prop:gam-eta-del_are_preq_add}.
The loss functions for $\gamma$-SMI, $\eta$-SMI and $\delta$-SMI given respectively in \eqref{eq:tmsmi_loss}, \eqref{eq:smi_loss} and \eqref{eq:kssmi_loss} are prequentially additive with respect to the belief updates given respectively in \eqref{eq:gamma_smi_post_tmp}, \eqref{eq:eta_smi_post_pow} and \eqref{eq:kssmi_def}.

\begin{proof}
The proof is similar to that of Proposition~\ref{prop:cutmodeladditive}.
We wish to show that
  \begin{equation}\label{eq:pacml2}
    l( \varphi,\theta ; Y, Z ,\pi_0) = l( \varphi,\theta ;Y^{(1)}, Z^{(1)},\pi_0)+l( \varphi,\theta ; Y^{(2)}, Z^{(2)}, \p(\varphi,\theta|;Y^{(1)},Z^{(1)}))
  \end{equation}
 for $(l,\p)$ in turn $(l^{(t)},\p^{(t)}_\gamma)$, $(l^{(s)},\p^{(s)}_\eta)$ and $(l^{(k)},\p^{(k)}_\delta)$. 
 
The respective total losses appearing on the LHS of \eqref{eq:pacml2} are given in \eqref{eq:tmsmi_loss}, \eqref{eq:smi_loss} and \eqref{eq:kssmi_loss}.
The corresponding losses in the first term on the RHS of \eqref{eq:pacml2} are respectively
\begin{align*}
    l^{(t)}( \varphi,\theta ;Y^{(1)}, Z^{(1)},\pi_0)&=\l^{(b)}( \varphi,\theta ; Y^{(1)}, Z^{(1)})+(1-\gamma)\log p(Y^{(1)}|\varphi)\\
    l^{(s)}( \varphi,\theta ;Y^{(1)}, Z^{(1)},\pi_0)&=\l^{(b)}( \varphi,\theta ; Y^{(1)}, Z^{(1)})-\eta \log p(Y^{(1)}|\varphi,\theta)+\log p(Y^{(1)}|\varphi)\\
    l^{(k)}( \varphi,\theta ;Y^{(1)}, Z^{(1)},\pi_0)&=\l^{(b)}( \varphi,\theta ; Y^{(1)}, Z^{(1)})- \log p_{\delta}(Y^{(1)}|\varphi,\theta)+\log p(Y^{(1)}|\varphi)
\end{align*}
where
\[
  p(Y^{(1)}|\varphi)=\int p(Y^{(1)}|\varphi, \theta)\pi(\theta|\varphi)d\theta
\]
throughout.
The loss in the second term on the RHS of \eqref{eq:pacml2} is
respectively
\begin{align*}
    l^{(t)}( \varphi,\theta ;Y^{(2)}, Z^{(2)},\p^{(t)}_1)&=\l^{(b)}( \varphi,\theta ; Y^{(2)}, Z^{(2)})+(1-\gamma)\log p(Y^{(2)}|Y^{(1)},\varphi) \\
    l^{(s)}( \varphi,\theta ;Y^{(1)}, Z^{(1)},\pi_0)&=\l^{(b)}( \varphi,\theta ; Y^{(1)}, Z^{(1)})-\eta \log p(Y^{(2)}|\varphi,\theta)+\log p(Y^{(2)}|Y^{(1)},\varphi)\\
    l^{(k)}( \varphi,\theta ;Y^{(1)}, Z^{(1)},\pi_0)&=\l^{(b)}( \varphi,\theta ; Y^{(1)}, Z^{(1)})- \log p_{\delta}(Y^{(2)}|\varphi,\theta)+\log p(Y^{(2)}|Y^{(1)},\varphi)
\end{align*}
where again
\[
  p(Y^{(2)}|Y^{(1)},\varphi)=\int p(Y^{(2)}|\varphi, \theta)\pi(\theta|Y^{(1)},\varphi)d\theta.
\]
This last relation holds throughout because $\pi(\theta|Y^{(1)},\varphi)$ is the ``prior'' passed on to the second stage from the belief update in the first stage, and the conditional probability for $\theta|Y^{(1)},\varphi$ is actually the same in $\gamma$-SMI, $\eta$-SMI and $\delta$-SMI as can be seen by inspecting \eqref{eq:gamma_smi_marg_post_phitheta}, \eqref{eq:eta_smi_marg_post} and \eqref{eq:kssmi_marginal_phitheta}.
 
The contribution from the Bayes-loss $l^{(b)}$ is straightforwardly additive.
Assembling the terms,
\begin{align*}
  RHS \protect \eqref{eq:pacml2}
   & =
  l( \varphi,\theta ;Y^{(1)}, Z^{(1)},\pi_0 )+l( \varphi,\theta ; Y^{(2)}, Z^{(2)},   \p(\varphi,\theta|;Y^{(1)},Z^{(1)}))                                  \\
   & = \l^{(b)}( \varphi,\theta ; Y, Z)\ 
  \left\{\begin{array}{l}
       +\  (1-\gamma)\left[\log p(Y^{(1)}|\varphi)+\log p(Y^{(2)}|Y^{(1)},\varphi)\right]  \\
       -\ \eta \left[\log p(Y^{(1)}|\varphi,\theta)+\log       p(Y^{(2)}|\varphi,\theta)\right]+\log p(Y^{(1)}|\varphi)+\log p(Y^{(2)}|Y^{(1)},\varphi) \\
       -\ \log p_\delta(Y^{(1)}|\varphi,\theta)-\log p_\delta(Y^{(2)}|\varphi,\theta) +log p(Y^{(1)}|\varphi)+\log p(Y^{(2)}|Y^{(1)},\varphi) 
  \end{array} \right.                      \\
  & = 
  \l^{(b)}( \varphi,\theta ; Y, Z)\ 
  \left\{\begin{array}{ll}
       +\ (1-\gamma) \log p(Y|\varphi)  & \mbox{($\gamma$-SMI)}\\
       -\ \eta  \log p(Y|\varphi,\theta) + \log p(Y|\varphi) & \mbox{($\eta$-SMI)}\\
       -\ \log p_\delta(Y|\varphi,\theta) +\log p(Y|\varphi) & \mbox{($\delta$-SMI)}
  \end{array} \right.                      \\
   & = 
  \left\{\begin{array}{c}
        l^{(t)}( \varphi,\theta ;Y, Z,\pi_0 )\\
        l^{(s)}( \varphi,\theta ;Y, Z,\pi_0 )\\
         l^{(k)}( \varphi,\theta ;Y, Z,\pi_0 )
  \end{array} \right.                      
\end{align*}
where we used the defining equations \eqref{eq:tmsmi_loss}, \eqref{eq:smi_loss} and \eqref{eq:kssmi_loss} for the losses for the full data to make the last step and
\begin{align*}
    p(Y^{(2)}|\varphi,\theta)p(Y^{(1)}|\varphi, \theta)=p(Y|\varphi, \theta)\\
    p_\delta(Y^{(2)}|\varphi,\theta)p_\delta(Y^{(1)}|\varphi, \theta)=p_\delta(Y|\varphi, \theta)\\
  p(Y|\varphi)=p(Y^{(1)}|\varphi)p(Y^{(2)}|Y^{(1)},\varphi)  
\end{align*}
for these iid data.
\end{proof}

\section{Details of examples}

\subsection{Simulation study: Biased data}
\label{app:biased_data}



The model components are given in Section~\ref{sec:biased_data}. The fitted and true observaiton models for $Y$ and $Z$ are the same.
We used parameter values $n=50, m=25, \sigma_\theta=0.33, \sigma_y=1, \sigma_z=3, \varphi^*=0, \theta^*=1$.

We estimated the PMSE's from samples. The posterior predictive density for $y,z\in \R$ is needed in order to calculate the ELPD. It is,
\begin{align*}
  p^{(k)}_{y,z,\delta}(y,z\mid Y,Z)=N((y,z)^T;\mu_{yz},\Sigma_{yz}),
\end{align*}
with
\[
  \mu_{yz}=\left(\begin{array}{c} (1-\rho)\mu_\delta+\rho\bar Y \\
      \mu_\delta
    \end{array}\right),\quad
  \Sigma_{yz}=\left(\begin{array}{cc}
      (1-\rho)\sigma_\delta^2+\sigma_{\theta|Y,\varphi}^2+\sigma_y^2 & (1-\rho)\sigma_\delta^2    \\
      (1-\rho)\sigma_\delta^2                                        & \sigma_\delta^2+\sigma_z^2
    \end{array}\right).
\]
The formula for the ELPD in this setting is
\[
  ELPD_{y,z}(Y,Z;\delta)=-\log(2\pi)-\frac{1}{2}\log(\det(\Sigma_{yz}))-\frac{1}{2}E_{p^*}\left(((y,z)-\mu_{yz}^T)\Sigma^{-1}_{yz}((y,z)^T-\mu_{yz})\right)
\]
where the expectation in $y,z$ is taken in the true generative distribution $p^*$ where
\[
  y\sim N(\theta^*+\varphi^*,\sigma_y^2),\quad z\sim N(\varphi^*,\sigma^2_z).
\]
Let $\mu^*=(\theta^*+\varphi^*,\varphi^*)$ and $\Sigma^*=\mbox{diag}(\sigma_y^2,\sigma_z^2)$.
The expectation is
\[
E_{p^*}\left(((y,z)-\mu_{yz}^T)\Sigma^{-1}_{yz}((y,z)^T-\mu_{yz})\right)=\mbox{trace}[\Sigma^{-1}_{yz}\Sigma^*]+ (\mu^{*T}-\mu_{yz}^T)\Sigma^{-1}_{yz}(\mu^*-\mu_{yz}).
\]



\subsection{Simulation study: Regression data}\label{app:regression_example}

The fitted and true models are given in Section~\ref{sec:another-example}. The parameter settings are $n=50$, $m=50$, $\sigma_y=0.25$, $\sigma_z=3$ and $X_i\sim U(0,2),\ i=1,...,n$.
The true parameter values are $\varphi^*=0$ and $\theta^*=1$.

\end{supplement}

\end{document}